\documentclass[ejs]{imsart}

\RequirePackage{amsthm,amsmath,amsfonts,amssymb}
\RequirePackage[numbers]{natbib}
\RequirePackage[colorlinks,citecolor=blue,urlcolor=blue]{hyperref}
\usepackage{graphicx}

\usepackage{mathtools}
\usepackage{psfrag,epsf}
\usepackage{enumerate}
\usepackage{centernot}
\usepackage[bottom]{footmisc}
\usepackage{indentfirst}
\usepackage{endnotes}
\usepackage{rotating}
\usepackage{booktabs}
\usepackage{algorithm}
\usepackage{array}
\usepackage{multirow}
\usepackage{algpseudocode}
\usepackage{enumerate}
\usepackage{bbm}
\usepackage{dsfont}
\usepackage{verbatim}
\usepackage{subfig}
\usepackage{xr}
\usepackage{relsize}
\usepackage{color}
%\externaldocument{Supplement}
\usepackage[top=1.1in, bottom=1.1in, left=1.3in, right=1.3in]{geometry}
\usepackage{tabularx}
\newcolumntype{L}{>{\raggedright\arraybackslash}X}

\newcommand{\be} {\begin{eqnarray*}}
	\newcommand{\ee} {\end{eqnarray*}}

\newcommand{\dCov}{{\textrm{dCov}}}
\newcommand{\CdCov}{{\textrm{CdCov}}}

\makeatletter
\newcases{nocases}
{\quad}
{$\m@th\displaystyle{##}$\hfil}
{$\m@th\displaystyle{##}$\hfil}
{.}{.}
\makeatother

\newcommand{\field}[1]{\mathbb{#1}}

\newcommand{\E}{\field{E}}
\newcommand{\tran}{^{\top\kern-\scriptspace}}

\newcommand{\Rmnum}[1]{\expandafter\romannumeral #1}

\def\bb{\mathbb}
\def\cal{\mathcal}
\def\dis{\displaystyle}

\newcommand{\bigCI}{\mathrel{\text{\scalebox{1.07}{$\perp\mkern-10mu\perp$}}}}
\newcommand{\nbigCI}{\centernot{\bigCI}}
\DeclareMathOperator{\tr}{tr}

%
%\makeatletter \makeatother \textwidth = 420pt
%\geometry{left=1.2in,top=1.5in,right=1.2in,bottom=0in,head=1.2in}
%\geometry{left=1.15in,top=1.7in,right=1.15in,bottom=0in,head=1.25in}

%\startlocaldefs
%%%%%%%%%%%%%%%%%%%%%%%%%%%%%%%%%%%%%%%%%%%%%%
%%                                          %%
%% Uncomment next line to change            %%
%% the type of equation numbering           %%
%%                                          %%
%%%%%%%%%%%%%%%%%%%%%%%%%%%%%%%%%%%%%%%%%%%%%%
%\numberwithin{equation}{section}
%%%%%%%%%%%%%%%%%%%%%%%%%%%%%%%%%%%%%%%%%%%%%%
%%                                          %%
%% For Axiom, Claim, Corollary, Hypothezis, %%
%% Lemma, Theorem, Proposition              %%
%% use \theoremstyle{plain}                 %%
%%                                          %%
%%%%%%%%%%%%%%%%%%%%%%%%%%%%%%%%%%%%%%%%%%%%%%
\theoremstyle{plain}

\newtheorem{theorem}{Theorem}[section]
\newtheorem{remark}{Remark}[section]

\newtheorem{lemma}{Lemma}[section]
\newtheorem{example}{Example}[section]
%%%%%%%%%%%%%%%%%%%%%%%%%%%%%%%%%%%%%%%%%%%%%%
%%                                          %%
%% For Assumption, Definition, Example,     %%
%% Notation, Property, Remark, Fact         %%
%% use \theoremstyle{remark}                %%
%%                                          %%
%%%%%%%%%%%%%%%%%%%%%%%%%%%%%%%%%%%%%%%%%%%%%%
\theoremstyle{remark}
\newtheorem{definition}[theorem]{Definition}
%\newtheorem*{remark}{Remark}
%\newtheorem*{example}{Example}[section]

% settings
%\pubyear{2020}
%\volume{0}
%\issue{0}
%\firstpage{1}
%\lastpage{8}
%\arxiv{2010.00000}

\startlocaldefs
\numberwithin{equation}{section}
\theoremstyle{plain}

\endlocaldefs

\begin{document}
	
	\begin{frontmatter}
		\title{Nonparametric causal structure learning \\in high dimensions} %\thanksref{T1}}
		\runtitle{Nonparametric causal structure learning in high dimensions}
		%\thankstext{T1}{Footnote to the title with the `thankstext' command.}
		
		\begin{aug}
			\author{\fnms{Shubhadeep} \snm{Chakraborty}\ead[label=e1]{deep20@uw.edu}}
			%\and
			%\author{\fnms{Xianyang} \snm{Zhang}\thanksref{t3}\ead[label=e2]{second@somewhere.com}}
			
			\address{Department of Biostatistics, University of Washington\\
				\printead{e1}}
			
			\author{\fnms{Ali} \snm{Shojaie}
				\ead[label=e3]{ashojaie@uw.edu}}
			
			\address{Department of Biostatistics, University of Washington\\
				%usually few lines long\\
				%usually few lines long\\
				\printead{e3}\\
			}
			
			%\thankstext{t1}{Some comment}
			%\thankstext{t2}{First supporter of the project}
			%\thankstext{t3}{Second supporter of the project}
			\runauthor{S. Chakraborty and A. Shojaie}
			
			%\affiliation{Some University and Another University}
			
		\end{aug}
		
		\begin{abstract}
			The PC and FCI algorithms are popular constraint-based methods for learning the structure of directed acyclic graphs (DAGs) in the absence and presence of latent and selection variables, respectively. These algorithms (and their order-independent variants, PC-stable and FCI-stable) have been shown to be consistent for learning sparse high-dimensional DAGs based on partial correlations. However, inferring conditional independences from partial correlations is valid if the data are jointly Gaussian or generated from a linear structural equation model --- an assumption that may be violated in many applications. To broaden the scope of high-dimensional causal structure learning, we propose nonparametric variants of the PC-stable and FCI-stable algorithms that employ the conditional distance covariance (CdCov) to test for conditional independence relationships. 
			% Our proposed algorithms account for non-linear and non-monotone conditional dependences among the random variables and generalize the PC and FCI algorithms to arbitrary \as{correct?} distributions over DAGs.
			As the key theoretical contribution, we prove that the high-dimensional consistency of the PC-stable and FCI-stable algorithms carry over to general distributions over DAGs when we implement CdCov-based nonparametric tests for conditional independence. Numerical studies demonstrate that our proposed algorithms perform nearly as good as the PC-stable and FCI-stable for Gaussian distributions, and offer advantages in non-Gaussian graphical models.
		\end{abstract}
		
		%\begin{keyword}[class=MSC]
		%\kwd[Primary ]{60K35}
		%\kwd{60K35}
		%\kwd[; secondary ]{60K35}
		%\end{keyword}
		
		\begin{keyword}
			\kwd{Causal Structure Learning}
			\kwd{Consistency}
			\kwd{FCI algorithm}
			\kwd{High Dimensionality}
			\kwd{Nonparametric Testing}
			\kwd{PC algorithm}
		\end{keyword}
		%\tableofcontents
	\end{frontmatter}

	\section{Introduction}\label{sec:intro}
	
	Directed acyclic graphs (DAGs) are commonly used to represent causal relationships among random variables (Lauritzen, 1996; Spirtes et al., 2000; Maathuis et al., 2019). %Methods for estimating DAGs from observational data can be broadly categorized as constraint-based, score-based and hybrid methods. 
	The PC algorithm (Spirtes et al., 2000) is the most popular constraint-based method for learning DAGs from observational data under the assumption of causal sufficiency, i.e., when there are no unmeasured common causes and no selection variables. It first estimates the skeleton of a DAG by recursively performing a sequence of conditional independence tests, and then uses the information from the conditional independence relations to partially orient the edges, resulting in a completed partially directed acyclic graph (CPDAG). In Section \ref{sec:overview}, we provide a review of these and other notions commonly used in the graphical modeling literature that are relevant to our work. Also we refer to estimating the CPDAG as structure learning of the underlying DAG throughout the rest of the paper. %Score-based methods aim to optimize a scoring criterion over the space of possible DAGs or CPDAGs, typically via a greedy search procedure (Chickering\,(2002a, 2002b), Nandy et al.\,(2018), etc.). And hybrid methods combine constraint and score-based approaches (Tsamardinos et al.\,(2006), Schmidt et al.\,(2007), etc.).
	
	Observational studies often involve latent and selection variables, which complicate the causal structure learning problem. Ignoring such unmeasured variables can make the causal inference based on the PC algorithm erroneous; see, e.g., Section 1.2 in Colombo et al.\,(2012) for some illustrations. The Fast Causal Inference (FCI) algorithm and its variants (Spirtes et al., 2000; Spirtes et al., 2001; Zhang, 2008; Colombo et al., 2012) utilize similar strategies as the PC algorithm to learn the DAG structure in the presence of latent and selection variables. 
	
	Both PC and FCI algorithms adopt a hierarchical search strategy --- they %starting from a complete undirected graph, recursively removing edges via 
	recursively perform conditional independence tests given subsets of increasingly larger cardinalities in some appropriate search pool.
	The PC algorithm is usually order-dependent, in the sense that its output depends on the order in which pairs of adjacent vertices and subsets of their adjacency sets are considered. The FCI algorithm suffers from a similar limitation. To overcome this limitation, Colombo and Maathuis\,(2014) proposed two variants of the PC and FCI algorithms, namely the PC-stable and FCI-stable algorithms, that resolve the order dependence at different stages of the algorithms.
	
	In general, testing for conditional independence is a problem of central importance in the causal structure learning. The literature on the PC and FCI algorithms predominantly uses partial correlations to infer conditional independence relations. It is well-known that the characterization of conditional independence by partial correlations, or in other words, equivalence between conditional independence and zero partial correlations only holds for multivariate normal random variables. Therefore, the high-dimensional consistency results for the PC and FCI algorithms (Kalisch and B\"uhlmann, 2007; Colombo et al., 2012) are limited to Gaussian graphical models, where the nodes correspond to random variables with a joint Gaussian distribution. Although the Gaussian graphical model is the standard parametric model for continuous data, it may not hold in many real data applications. 
	% In particular, Gaussian graphical models are not appropriate when the observations are categorical, discrete, have heavy-tail distributions, or their support is a subset of the real line. 
	Although this limitation can be somewhat relaxed by considering linear structural equation models (SEMs) with general noise distributions (Loh and B\"uhlmann, 2014), linear SEMs and joint Gaussianity are essentially equivalent (Voorman et al., 2014). Moreover, neither approach is appropriate when the observations are categorical, discrete, have heavy-tail distributions, or their support is a subset of the real line. In Section \ref{sec:num_real}, for example, we present a real application where all the observed variables are categorical, and therefore far from being Gaussian. As an improvement, Harris and Drton\,(2013) used rank-based partial correlations to test for conditional independence relations, showing that the high-dimensional consistency of the PC algorithm holds for a broader class of Gaussian copula models. 
	Some nonparametric versions of the PC algorithm have been also proposed in the literature via kernel-based tests for conditional independence (Sun et al., 2007; Zhang et al., 2018); however, they lack theoretical justifications of the correctness of the algorithms, and are not studied in high dimensions. 
	
	%A lot of research has been done on distance, kernel, graph, rank, information theory and copula-based approaches for nonparametric independence testing.
	%; see for example Gretton et al.\,(2005, 2007), Bergsma and Dassios\,(2014), Weihs et al.\,(2018), P\'{o}czos et al.\,(2012), Roy et al.\,(2018), Berrett and Samworth\,(2019), Heller et al.\,(2013), Biswas et al.\,(2016), Sarkar et al.\,(2018), Chatterjee\,(2019), just to mention a few. 
	
	This work aims to broaden the applicability of the PC-stable and FCI-stable algorithms to general distributions by employing a nonparametric test for conditional independence relationships. To this end, we utilize recent developments on dependence metrics that quantify non-linear and non-monotone dependence between multivariate random variables. More specifically, our work builds on the idea of distance covariance (dCov) proposed by Sz\'{e}kely et al.\,(2007) and its extension to conditional distance covariance (CdCov) by Wang et al.\,(2015) as a nonparametric measure of non-linear and non-monotone conditional independence between two random vectors of arbitrary dimensions given a third.
	% we build on the a seminal work, Sz\'{e}kely et al.\,(2007) introduced the idea of distance covariance (dCov) as a distance-based dependence metric that completely characterizes dependence between two random vectors of arbitrary dimensions. This has been extended by Wang et al.\,(2015) to introduce conditional distance covariance (CdCov) as a nonparametric measure of non-linear and non-monotone conditional independence between two random vectors of arbitrary dimensions given a third. 
	Utilizing this flexibility, we use the conditional distance covariance (CdCov) to test for conditional independence relationships in the sample versions of the PC-stable and FCI-stable algorithms. 
	The resulting algorithms --- which, for distinction, are named `nonPC' and `nonFCI' --- facilitate causal structure learning from general distributions over DAGs and are shown to be consistent in sparse high-dimensional settings. 
	Our consistency results only require mild moment and tail conditions on the set of variables, without requiring any strict distributional assumptions. To our knowledge, the proposed generalizations of PC/PC-stable or the FCI/FCI-stable algorithms provide the first general nonparametric framework for causal structure learning with theoretical guarantees in high dimensions. 
	
	% Using CdCov, we propose the `nonPC' and `nonFCI' algorithms, which employ conditional distance covariance to test for conditional independence relationships in the sample versions of the PC-stable and FCI-stable algorithms, respectively, instead of partial correlations. The consistency of our proposed algorithms are established in sparse high-dimensional settings. This essentially broadens the scope or applicability of the PC/PC-stable and FCI/FCI-stable algorithms to general distributions supported by the DAG, and enables taking into account non-linear and non-monotone conditional dependence among the random variables, something that partial correlations fail to capture. Our consistency results only require some mild moment and tail conditions on the set of variables, without requiring any strict distributional assumptions. We are not aware of any previous work in the graphical modeling literature that extends the PC/PC-stable or the FCI/FCI-stable algorithms to a general nonparametric framework along with theoretical guarantees on their consistency properties in high dimensions. 
	
	The rest of the paper is organized as follows. In Section~\ref{sec:overview}, we review the relevant background, including preliminaries on graphical modeling (Section~\ref{subsec:pc}) and a brief overview of dCov and CdCov (Section~\ref{subsec:dcov}). The nonparametric version of the PC-stable algorithm is presented in Section~\ref{subsec:methods_nonPC}. As a key contribution of the paper, we establish that the algorithm consistently estimates the skeleton and the equivalence class of the underlying sparse high-dimensional DAG in a general nonparametric framework. We then present the nonparametric version of the FCI-stable algorithm in Section~\ref{subsec:methods_nonFCI} and establish its consistency in sparse high-dimensional settings. 
	As the FCI involves the adjacency search of the PC algorithm, any improvement on the PC/PC-stable directly carries over to the FCI/FCI-stable as well. 
	%The theoretical guarantees of the nonFCI essentially rests upon that of the nonPC in high dimensions. 
	In Section \ref{sec:num}, we compare the performances of our algorithms with the PC-stable and FCI-stable using both simulated datasets (involving both Gaussian and non-Gaussian examples), as well as a real dataset. These numerical studies clearly demonstrate that nonPC and nonFCI algorithms are comparable with PC-stable and FCI-stable for Gaussian data and offer improvements for non-Gaussian data. 

	\section{Background}\label{sec:overview}
	
	\subsection{Preliminaries on graphical modeling}\label{subsec:pc}
	
	A graph $\cal{G} = (V,E)$ consists of a vertex set $V=\{1, \dots, p\}$ and an edge set $E \subseteq V \times V$. In a graphical model, the vertices or nodes are associated with random variables $X_a$ for $1\leq a \leq p$. Throughout, we index the nodes by the corresponding random variables. 
	We also allow the edge set $E$ of the graph $\cal{G}$ to contain (a subset of) the following six types of edges: $\rightarrow$ (\textit{directed}), $\leftrightarrow$ (\textit{bidirected}), $-$ (\textit{undirected}), $\circ \hspace{-0.06in}-\hspace{-0.06in}\circ$ (\textit{nondirected}), $\circ\hspace{-0.02in}- $ (\textit{partially undirected}) and $\circ\hspace{-0.07in}\rightarrow $ (\textit{partially directed}). The endpoints of an edge are called marks, which can be tails, arrowheads or circles. We use the symbol `$\star$' to denote an arbitrary edge mark.  A \textit{mixed graph} is a graph containing directed, bidirected and undirected edges. 
	A graph containing only directed edges $(\rightarrow)$ is called a \textit{directed graph}, one containing only undirected edges $(-)$ is called an \textit{undirected graph}, and one containing directed and undirected edges is called a \textit{partially directed graph}. 
	%Essentially directed, undirected and partially directed graphs are special cases of mixed graphs. \as{needed?}
	%\as{below and elsewhere: although it is ok to refer to nodes of the graph as $X_a$, this needs to be clarified as the nodes are indexed by $a$ and the random variables $X_a$ \textit{correspond} to nodes of the graph.} \ch{(clarified it above, highlighted in blue)}  \as{you can either replace the $X_a$ by $a$ when eg defining $\mathrm{adj}$ (this is what I often do) or say that we use $a$ and $X_a$ interchangeably --- either is ok} \ch{(I think I used the notation $\mathrm{adj}(\cal{G}, X_a)$ throughout, not $\mathrm{adj}(\cal{G}, a)$)} 
	
	The \textit{adjacency set} of a vertex $X_a$ in the graph $\cal{G} = (V,E)$, denoted $\mathrm{adj}(\cal{G}, X_a)$, is the set of all vertices in $V$ that are adjacent to $X_a$, or in other words, are connected to $X_a$ by an edge. The \textit{degree} of a vertex $X_a$, $|\mathrm{adj}(\cal{G}, X_a)|$, is defined as the number of vertices adjacent to it. A graph is \textit{complete} if all pairs of vertices in the graph are adjacent. A vertex $X_b \in \mathrm{adj}(\cal{G}, X_a)$ is called a \textit{parent} of $X_a$ if $X_b \rightarrow X_a$, a \textit{child} of $X_a$ if $X_a \rightarrow X_b$\, %, a \textit{spouse} of $X_a$ if $X_a \leftrightarrow X_b$ 
	and a \textit{neighbor} of $X_a$ if $X_a - X_b$. 
	%\sout{The corresponding set of parents, children, spouses and neighbors of $X_a$ are denoted by $\mathrm{pa}(\cal{G}, X_a)$, $\mathrm{ch}(\cal{G}, X_a)$, $\mathrm{sp}(\cal{G}, X_a)$ and $\mathrm{ne}(\cal{G}, X_a)$, respectively.} 
	%\as{you may want to say that when clear we drop $\cal{G}$ and write eg $\mathrm{ne}(X_a)$; also do you need all of these, eg children and spouses, later? finally, please replace $ne$ etc by $\mathrm{ne}$} \ch{(I striked out the above line as I think we only used the notions of parents, children and neighbors, but never used these notations anywhere.)} \asr{thanks. perhaps we should not define spouse either; however, the way you define neighbor seems to be limited to undirected edges here --- is that how you use it later as well, or do you use it more generally to refer to any node in the adjacency set (that's usually how it is used, but ok if we define it differently, as long as we are consistent)}
	The \textit{skeleton} of the graph $\cal{G}$ is the undirected graph obtained by replacing all the edges of $\cal{G}$ by undirected edges, in other words, ignoring all the edge orientations. Three vertices $\langle X_a, X_b, X_c \rangle$ are called an \textit{unshielded triple} if $X_a$ and $X_b$ are adjacent, $X_b$ and $X_c$ are adjacent, but $X_a$ and $X_c$ are not adjacent. A \textit{path} is a sequence of distinct adjacent vertices. 
	A node $X_a$ is an \textit{ancestor} of its \textit{descendent} $X_b$, if $\cal{G}$ contains a directed path $X_a \to \cdots \to X_b$. 
	A non-endpoint vertex $X_a$ on a path is called a collider on the path if both the edges preceding and succeeding it have an arrowhead at $X_a$, or in other words, the path contains $\star\hspace{-0.05in}\rightarrow X_a \leftarrow\hspace{-0.05in}\star$. An unshielded triple $\langle X_a, X_b, X_c \rangle$ is called a \textit{v-structure} if $X_b$ is a collider on the path $\langle X_a, X_b, X_c \rangle$.
	
	%A \textit{directed path} is a sequence of distinct adjacent vertices along directed edges that follows the direction of the arrowheads. A \textit{directed cycle} is formed by a directed path from $X_a$ to $X_b$ together with the edge $X_b \rightarrow X_a$. 
	
	A \textit{cycle} occurs in a graph when there is a path from $X_a$ to $X_b$, and $X_a$ and $X_b$ are adjacent. A directed path from $X_a$ to $X_b$ forms a \textit{directed cycle} together with the edge $X_b \rightarrow X_a$, and it forms an almost directed cycle together with the edge $X_b \leftrightarrow X_a$. Three vertices that form a cycle are called a {\it triangle}. A \textit{directed acyclic graph} (DAG) is a directed graph that does not contain any cycle. A DAG entails conditional independence relationships via a graphical criterion called \textit{d-separation} (Section 1.2.3 in Pearl, 2000). Two vertices $X_a$ and $X_b$ that are not adjacent in a DAG $\cal{G}$ are d-separated in $\cal{G}$ by a subset $X_S \subseteq V \backslash \{X_a, X_b\}$. A probability distribution $P$ on ${\bb R}^p$ is said to be \textit{faithful} with respect to the DAG $\cal{G}$ if the conditional independence relationships in $P$ can be inferred from $\cal{G}$ using d-separation and vice versa, in other words, $X_a \bigCI X_b \vert X_S$ if and only if $X_a$ and $X_b$ are d-separated in $\cal{G}$ by $X_S$.
	
	A graph that is both (partially) directed and acyclic, is called a \textit{partially directed acyclic graph (PDAG)}. DAGs that encode the same set of conditional independence relations, form a Markov equivalence class (Verma and Pearl, 1990). Two DAGs belong to the same Markov equivalence class if and only if they have the same skeleton and the same v-structures. A Markov equivalence class of DAGs can be uniquely represented by a \textit{completed partially directed acyclic graph (CPDAG)}, which is a PDAG that satisfies the following : i) $X_a \rightarrow X_b$ in the CPDAG if $X_a \rightarrow X_b$ in every DAG in the Markov equivalence class, and, ii) $X_a-X_b$ in the CPDAG if the Markov equivalence class contains a DAG in which $X_a \rightarrow X_b$ as well as a DAG in which $X_a \leftarrow X_b$.
	
	Estimation of the CPDAG by the PC algorithm involves two steps: 1) estimation of the skeleton and separating sets (also called the adjacency search step) via recursive conditional independence tests; and, 2) partial orientation of edges; see Algorithms 1 and 2 in Kalisch and B\"uhlmann\,(2007) for details.%, which popularized the algorithm in sparse high-dimensional settings. Under the strong faithfulness assumption, they proved that the PC algorithm consistently estimates the skeleton and the equivalence class of an underlying sparse DAG where the number of nodes is allowed to grow at an arbitrary polynomial rate of the sample size. 
	
	In presence of latent and selection variables, one needs a generalization of a DAG, called a \textit{maximal ancestral graph} (MAG). A mixed graph is called an \textit{ancestral graph} if it contains no directed or almost directed cycles and no subgraph of the type $X_a - X_b \leftarrow\hspace{-0.05in}\star\, X_c$. DAGs form a subset of ancestral graphs. A MAG is an ancestral graph in which every missing edge corresponds to a conditional independence relationship via the m-separation criterion (Richardson and Spirtes, 2002), a generalization of the notion of d-separation. Multiple MAGs may represent the same set of conditional independence relations. Such MAGs form a Markov equivalence class which can be represented by a \textit{partial ancestral graph} (PAG) (Ali et al, 2009); see Richardson and Spirtes\,(2002) for additional details. 
	
	Under the faithfulness assumption, the Markov equivalence class of a DAG with latent and selection variables can be learned %from the conditional independence relationships among the observed variables 
	using the FCI algorithm (e.g., Algorithm~3.1 in Colombo et al., 2012), which is a modification of the PC algorithm. %We refer the reader to Algorithm 3.1 in Colombo et al.\,(2012) for a compact description of the FCI algorithm. %FCI first runs the PC algorithm to obtain a preliminary skeleton, which is a superset of the PAG skeleton. Based on that, it then computes certain sets called `Possible D-SEP' sets in the next step, and estimates the PAG skeleton using those. The output of the FCI algorithm is the estimated PAG.
	The FCI algorithm first employs the adjacency search of the PC algorithm, and then performs additional conditional independence tests because of the presence of latent variables followed by partial orientation of the edges, resulting in an estimated PAG. To estimate the skeletons (of the DAG and the PAG, respectively), both the PC and the FCI algorithms adopt a hierarchical search strategy that starts with a complete undirected graph and recursively removes edges via conditional independence tests given subsets of increasingly larger cardinalities in some appropriate search pool. Colombo and Maathuis\,(2014) proposed order-independent variants of the PC and the FCI algorithms, namely the PC-stable and FCI-stable algorithms. %and we would refer the reader to Section 4 in that paper for detailed descriptions and illustrations.
	To make the paper self-contained, we provide the pseudocodes of the oracle versions of the PC-stable and FCI-stable algorithms in Appendix~\ref{sec:preliminaries}. %\as{depending on the journal this may be the appendix etc} \ch{(okay)}

	%\chakraborty{\hspace{-0.1in}(Comment : Should we add definitions of MAGs and PAGs over here for the sake of completeness? I feel that then we need to first illustrate the idea of ancestral graphs, then m-separation criterion, MAGs, etc., which will make this section VERY long. Would love to hear your thoughts and suggestions !)}
	
	\subsection{Distance covariance and conditional distance covariance}\label{subsec:dcov}
	
	%\emph{Notations}. 
	We start by describing the notation used throughout the paper. We denote by $\Vert\cdot\Vert_p$ the Euclidean norm of $\mathbb{R}^p$ and use $\Vert\cdot\Vert$ when the dimension is clear from the context. 
	% (we shall use it interchangeably with $\Vert\cdot\Vert$ when there is no confusion). 
	We use $X \bigCI Y$ to denote the independence of $X$ and $Y$ and use $\E_U$ to denote expectation with respect to the probability distribution of the random variable $U$. For any set $S$, we denote its cardinality by $|S|$. 
	
	We use the usual asymptotic notation, `$O$' and `$o$', as well as their probabilistic counterparts, $O_p$ and $o_p$, which denote stochastic boundedness and convergence in probability, respectively.
	%stand for the usual notations in mathematics : `is of the same order as' and `is ultimately smaller than'. 
	For two sequences of real numbers $\{a_n\}_{n=1}^{\infty}$ and $\{b_n\}_{n=1}^{\infty}$, $a_n \asymp b_n$ if and only if $a_n/b_n = O(1)$ and $b_n/a_n = O(1)$ as $n \to \infty$. We use the symbol $``a \lesssim b"$ to indicate that $a \leq C \,b$\, for some constant $C>0$. 
	%We utilize the order in probability notations such as stochastic boundedness $O_p$ (big O in probability) and convergence in probability $o_p$ (small o in probability). %Denote ${\bf 1}_n = (1,\dots, 1) \in \bb{R}^n$.d 
	For a matrix $A = (a_{kl})_{k,l=1}^n \in \bb{R}^{n \times n}$, we denote its
	determinant by $|A|$ and define its\, $\cal{U}$-centered version $\tilde{A} = (\tilde{a}_{kl})_{k,l=1}^n$ as 
	\begin{align}\label{U centering def}
		\tilde{a}_{kl} = \begin{cases}
			a_{kl}\, -\,\mathlarger{ \frac{1}{n-2}} \dis\sum_{j=1}^n a_{kj} \,-\, \frac{1}{n-2} \dis\sum_{i=1}^n a_{il}\, +\, \frac{1}{(n-1)(n-2)} \dis\sum_{i,j=1}^n a_{ij}, \; & k \neq l,\\
			0, & k=l, \end{cases}
	\end{align}
	for $k, l = 1, \dots, \, n$. 
	% For any matrix $A$, denote its determinant by $|A|$. 
	Finally, we denote the integer part of $a\in\mathbb{R}$ by $\lfloor a \rfloor$.
	
	\begin{comment}
	Denote ${\bf 1}_n = (1,\dots, 1) \in \bb{R}^n$. For a matrix $A = (a_{kl})_{k,l=1}^n \in \bb{R}^{n \times n}$, define its\, $\cal{U}$-centered version $\tilde{A} = (\tilde{a}_{kl}) \in \bb{R}^{n \times n}$ as follows
	\begin{align}\label{U centering def}
	\tilde{a}_{kl} = \begin{cases}
	a_{kl}\, -\,\mathlarger{ \frac{1}{n-2}} \dis\sum_{j=1}^n a_{kj} \,-\, \frac{1}{n-2} \dis\sum_{i=1}^n a_{il}\, +\, \frac{1}{(n-1)(n-2)} \dis\sum_{i,j=1}^n a_{ij}, \; & k \neq l,\\
	0, & k=l, \end{cases}
	\end{align}
	for $k, l = 1, \dots, \, n$. 
	\end{comment}

	Sz\'ekely et al.\,(2007), in their seminal paper, introduced the notion of distance covariance (dCov, henceforth) to quantify non-linear and non-monotone dependence between two random vectors of arbitrary dimensions. Consider two random vectors $X\in \bb{R}^p$ and $Y \in \bb{R}^q$ with $\E \Vert X \Vert_p < \infty$ and $\E \Vert Y \Vert_q < \infty$. The distance covariance between $X$ and $Y$ is defined as the positive square root of
	\begin{align*}
		\dCov^2(X,Y)=\frac{1}{c_{p}c_{q}}\int_{{\mathbb
				R}^{p+q}}\frac{|f_{X,Y}(t,s)-f_X(t)f_Y(s)|^2}{\Vert t \Vert_p^{1+p}\,\Vert s \Vert_q^{1+q}}dtds\,,
	\end{align*}
	where $f_X$, $f_Y$ and $f_{X,Y}$ are the individual and joint
	characteristic functions of $X$ and $Y$ respectively, and $c_{p}=\pi^{(1+p)/2}/\,\Gamma((1+p)/2)$
	is a constant with $\Gamma(\cdot)$ being the complete gamma function.
	
	The key feature of dCov is that it completely characterizes the independence between two random vectors, or in other words $\dCov(X,Y)=0$ if and
	only if $X \bigCI Y$. According to Remark~3 in Sz\'ekely et al.\,(2007), dCov can be equivalently expressed as
	\begin{align*}
		\dCov^2(X,Y) \;=\; \E\,\Vert X-X'\Vert_p \Vert Y-Y'\Vert_q \,+\, \E\,\Vert X-X'\Vert_p\,\E\,\Vert Y-Y'\Vert_q \,-\, 2 \,\E\,\Vert X-X'\Vert_p \Vert Y-Y''\Vert_q\,. 
	\end{align*}
	
	\begin{comment}
	For an observed random sample $(X_i,Y_i)^{n}_{i=1}$ from the joint distribution of $X$ and $Y$, define the distance matrices $A=\big(A_{kl}\big)_{k,l=1}^n$ and $B=\big(B_{kl}\big)_{k,l=1}^n \in \mathbb{R}^{n\times n}$ where $A_{kl} = \Vert X_k - X_l \Vert_p$ and $B_{kl} = \Vert Y_k - Y_l \Vert_q$. Following the $\mathcal{U}$-centering idea in Sz\'ekely and Rizzo\,(2014) and Lemma A.1 in Park et al.\,(2015), Lemma 2.1 in Yao et al.\,(2018)  shows that an unbiased U-statistic type estimator of the squared dCov can be expressed as
	\begin{align*}
	dCov^2_n(X,Y)\; :&=\;  \frac{1}{n(n-3)} \left[\tr (\tilde{A} \tilde{B}) \,+\, \frac{{\bf 1}_n^T \tilde{A} {\bf 1}_n\, {\bf 1}_n^T \tilde{B} {\bf 1}_n}{(n-1)(n-2)} \,-\, \frac{2}{n-2}{\bf 1}_n^T \tilde{A}  \tilde{B} {\bf 1}_n  \right]\,.
	\end{align*}
	
	It has been pointed out in the literature that $dCov^2_n(X,Y)$ has a degeneracy of order 1, and therefore when scaled by $n$, converges weakly to a non-pivotal distribution with infinitely many nuisance parameters (see, for example, Lemma 4.10 in Huang and Huo\,(2017)). Therefore, it has been a common practice in the literature to use resampling based procedures for dCov based tests for independence. Given the quantification of non-linear and non-monotone dependence by the population quantity and the striking simplicity of the sample version, dCov has been widely studied, extended and analyzed in the literature over the last one decade.
	\end{comment}

	This alternate expression comes handy in constructing V or U-statistic type estimators for the quantity. For an observed random sample $(X_i,Y_i)^{n}_{i=1}$ from the joint distribution of $X$ and $Y$, define the distance matrices $d^X=\big(d^X_{ij}\big)_{i,j=1}^n$ and $d^Y=\big(d^Y_{ij}\big)_{i,j=1}^n \in \mathbb{R}^{n\times n}$ where $d^X_{ij}:= \Vert X_i - X_j \Vert_p$ and $d^Y_{ij}:= \Vert Y_i - Y_j \Vert_q$. Following the $\mathcal{U}$-centering idea in Sz\'ekely and Rizzo\,(2014), an unbiased U-statistic type estimator of $\dCov^2(X,Y)$ can be expressed as
	\begin{align}\label{estimator of dCov}
		\dCov^2_n(X,Y)\; &:=\;  (\tilde{d}^{\,X} \cdot\, \tilde{d}^{\,Y})\;:=\; \frac{1}{n(n-3)} \dis \sum_{i\neq j} \tilde{d}^{\,X}_{ij} \tilde{d}^{\,Y}_{ij} \; ,
	\end{align}
	where $\tilde{d}^{\,X} = (\tilde{d}^{\,X}_{ij})_{i,j=1}^n$\, and \, $\tilde{d}^{\,Y} = (\tilde{d}^{\,Y}_{ij})_{i,j=1}^n$ are the \,$\cal{U}$-centered versions of the matrices $\tilde{d}^{\,X}$ and $\tilde{d}^{\,Y}$, respectively, as defined in (\ref{U centering def}). 
	
	Wang et al.\,(2015) recenlty generalized the notion of dCov and  introduced the conditional distance covariance (CdCov, henceforth) as a measure of conditional dependence between two random vectors of arbitrary dimensions given a third. CdCov essentially replaces the characteristic functions used in the definition of dCov by conditional characteristic functions.  
	Consider a third random vector $Z\in \bb{R}^r$ with $\E (\Vert X \Vert_p + \Vert Y \Vert_q \mid Z) < \infty$. Denote by $f_{X,Y|Z}$ the conditional joint characteristic function of $X$ and $Y$ given $Z$, and by $f_{X|Z}$ and $f_{Y|Z}$ the conditional marginal characteristic functions of $X$ and $Y$ given $Z$, respectively. Then CdCov between $X$ and $Y$ given $Z$ is defined as the positive square root of 
	\begin{align*}
		\CdCov^2(X,Y|Z)=\frac{1}{c_{p}c_{q}}\int_{{\mathbb
				R}^{p+q}}\frac{|f_{X,Y|Z}(t,s)-f_{X|Z}(t)f_{Y|Z}(s)|^2}{\Vert t \Vert_p^{1+p}\,\Vert s \Vert_q^{1+q}}dtds.
	\end{align*}
	The key feature of CdCov is that $\CdCov\,(X,Y|Z)=0$ almost surely if and
	only if $X \bigCI Y | Z$, which is quite straightforward to see from the definition.

	Similar to dCov, an equivalent alternative expression can be established for CdCov that avoids complicated integrations involving conditional characteristic functions. Let $W_i = (X_i, Y_i,\\ Z_i)_{i=1}^n$ be an i.i.d. sample from the joint distribution of $W:=(X,Y,Z)$. Define $d_{ijkl} := \big(d^X_{ij} + d^X_{kl} - d^X_{ik} - d^X_{jl}\big)\, \big(d^Y_{ij} + d^Y_{kl} - d^Y_{ik} - d^Y_{jl}\big)$, which is not symmetric with respect to $\{i,j,k,l\}$, and therefore necessitates defining the following symmetric form: $d^S_{ijkl} := d_{ijkl} \,+\,d_{ijlk}\,+\,d_{ilkj}$.
	Lemma 1 in Wang et al.\,(2015) establishes an equivalent representation of $\CdCov^2(X,Y|Z=z)$ as
	\begin{align}\label{cdcov equiv}
		\CdCov^2(X,Y|Z=z) \;=\; \frac{1}{12}\, \E\,\big[ d^S_{1234}\,\vert\, Z_1=z, Z_2=z, Z_3=z, Z_4=z \big]\,.
	\end{align}
	
	\begin{remark}\label{remark1}
		In a recent work, Sheng and Sriperumbudur\,(2019) explore the connection between conditional independence measures induced by distances on a metric space and reproducing kernels associated with a reproducing kernel Hilbert space (RKHS). They generalize CdCov to arbitrary metric spaces of negative type --- termed generalized CdCov (gCdCov) --- and develop a kernel-based measure of conditional independence, namely the  Hilbert-Schmidt conditional independence criterion (HSCIC). Theorem~1 in their paper establishes an equivalence between gCdCov and HSCIC, or in other words, between distance and kernel-based measures of conditional independence. 

	\end{remark}

	For $w \in {\bb R}^r$, let $K_H(w):=|H|^{-1}\,K(H^{-1} w)$ be a kernel function where $H$ is the diagonal matrix $\textrm{diag}(h,\dots,h)$ determined by a bandwidth parameter $h$. $K_H$ is typically considered to be the Gaussian kernel $K_H(w) = (2\pi)^{-\frac{r}{2}}\,|H|^{-1}\, \exp\big(-\frac{1}{2} w^T H^{-2} w \big)$, where $w \in {\bb R}^r$.
	
	Let $K_{iu} := K_H(Z_i-Z_u) = |H|^{-1}\,K(H^{-1} (Z_i-Z_u))$ and $K_i(Z):=K_H(Z-Z_i)$ for $1\leq i,u \leq n$. Then by virtue of the equivalent representation of CdCov in (\ref{cdcov equiv}), a V-statistic type estimator of $\CdCov ^2(X,Y|Z)$ can be constructed as 
	\begin{align}\label{cdcov est}
		\CdCov^2_n(X,Y|Z) \;:=\; \dis \sum_{i,j,k,l} \frac{K_i(Z)\, K_j(Z)\, K_k(Z)\, K_l(Z)}{12\,\big(\sum_{i=1}^n K_i(Z)\big)^4}\, d^S_{ijkl}\,.
	\end{align}
	Under certain regularity conditions, Theorem~4 in Wang et al.\,(2015) shows that conditioned on $Z$, $\CdCov^2_n(X,Y|Z) \overset{P}{\longrightarrow} \CdCov^2(X,Y|Z)$ as $n\to \infty$.

	\section{Methodology and Theory}\label{sec:methods}
	%\subsection{The nonPC algorithm}\label{subsec:methods_nonPC}
	\subsection{The Nonparametric PC Algorithm in High Dimensions}\label{subsec:methods_nonPC}
	
	To get a measure of conditional independence between $X$ and $Y$ given $Z$ that is free of $Z$, we define 
	\begin{align}\label{rho0*}
		\rho_0^*\,(X, Y | Z) \;:=\; \E\,\big[ \CdCov^2_n(X,Y|Z) \big]\,.
	\end{align}
	Clearly $\rho_0^*\,(X, Y | Z)=0$ if and only if $X \bigCI Y \,|\, Z$. Consider a plug-in estimate of $\rho_0^*\,(X, Y | Z)$ as 
	\begin{align}\label{rho0*_est}
		\begin{split}
			&\widehat{\rho}^{\;*}(X, Y | Z) \;:=\; \frac{1}{n} \dis \sum_{u=1}^n \CdCov^2_n(X,Y|Z_u)\, \;=\; \frac{1}{n} \dis \sum_{u=1}^n \Delta_{i,j,k,l;u}\,,\\
			\textrm{where}\qquad &\Delta_{i,j,k,l;u} \;:=\; \dis \sum_{i,j,k,l} \frac{K_{iu}\, K_{ju}\, K_{ku}\, K_{lu}}{12\,\big(\sum_{i=1}^n K_{iu}\big)^4}\, d^S_{ijkl}\,.
		\end{split}
	\end{align}
	We reject $H_0 : X \bigCI Y | Z$\, vs\, $H_A : X \nbigCI Y | Z$ at level $\alpha \in (0,1)$ if \,$\widehat{\rho}^{\;*}(X, Y | Z) > \xi_{n,\alpha}$, where the threshold $\xi_{n,\alpha}$ is typically obtained by a local bootstrap procedure (see Section 4.3 in Wang et al., 2015). Henceforth we will often denote $\rho_0^*\,(X, Y | Z)$ simply by $\rho_0^*$ for notational simplicity, whenever there is no confusion.
	
	In view of the complete characterization of conditional independence by $\rho_0^*$, we propose testing for conditional independence relations nonparametrically in the sample version of the PC-stable algorithm based on $\rho_0^*$, rather than partial correlations. We coin the resulting algorithm the `nonPC' algorithm, to emphasize that it is a nonparametric generalization of parametric PC-stable algorithms. 
	
	The \textit{oracle version} of the first step of nonPC, or the skeleton estimation step, is exactly the same as that of the PC-stable algorithm (Algorithm~\ref{PC-stable1} in Appendix~\ref{sec:preliminaries}). The second step which extends the skeleton estimated in the first step to a CPDAG (Algorithm~\ref{PC-stable2} in Appendix~\ref{sec:preliminaries}) comprises of some purely deterministic rules for edge orientations, and is exactly the same for both the nonPC and PC-stable as well. The only difference lies in the implementation of the tests for conditional independence relationships in the \textit{sample versions} of the first step. Specifically, we replace all the conditional independence queries in the first step by tests based on $\rho_0^*\,(X, Y | Z)$. At some pre-specified significance level $\alpha$, we infer that $X_a \bigCI X_b \,|\, X_S$ when $\widehat{\rho}^{\;*}(X_a, X_b | X_S) \leq \xi_{n,\alpha}$,
	%\begin{align}\label{CI test}
	%X_a \bigCI X_b \,|\, X_S \; \iff \; \widehat{\rho}^{\;*}(X_a, X_b | X_S) \leq \xi_{n,\alpha}\,,
	%\end{align}
	where $a, b \in V$ and $S \subseteq V$, $|S|\neq\phi$. When $|S|=\phi$,  $\widehat{\rho}^{\;*}(X_a, X_b | X_S) = \dCov_n^2(X_a,X_b)$ and $\rho_0^*\,(X, Y | Z) = \dCov^2(X,Y)$. The critical value $\xi_{n,\alpha}$ in this case is obtained by a bootstrap procedure (see, e.g., Section~4 in Chakraborty and Zhang, 2019 with $d=2$).
	
	Given that the equivalence between conditional independence and zero partial correlations only holds for multivariate normal random variables, our generalization broadens the scope of applicability of causal structure learning by the PC/PC-stable algorithm to general distributions over DAGs. This nonparametric approach is thus a natural extension of Gaussian and Gaussian copula models. It enables capturing non-linear and non-monotone conditional dependence relationships among the variables, which partial correlations fail to detect. %We will refer to the PC-stable algorithm that uses the CI tests based on $\rho_0^*$ as the `nonPC' algorithm. 
	%
	%\chakraborty{Comment : I feel that overall I'm perhaps being somewhat hand wavy by not including the algorithmic structure of the nonPC/nonFCI in the paper, but just mentioning what modification we are proposing to the PC-stable/FCI-stable algorithms. Is that making the paper not look self-contained? The point is, the oracle version of our algorithms won't be any different than PC-stable/FCI-stable; the difference would be in their sample versions. Kalisch and B\"uhlmann\,(2007), Colombo and Maathuis\,(2014) everyone mentions the oracle PC/PC-stable first, and then mentions in a couple of lines what changes to make to get the sample version. Including the oracle version seems a wastage of space, when that is not anything new in the literature. Would love to hear suggestions !}
	
	Next, we establish theoretical guarantees on the correctness of the nonPC algorithm in learning the true underlying causal structure in sparse high-dimensional settings. Our consistency results only require mild moment and tail conditions on the set of variables, without making any strict distributional assumptions. 
	Denote by $m_p$ the maximum cardinality of the conditioning sets considered in the adjacency search step of the PC-stable algorithm. Clearly $m_p \leq q$, where $q := \max_{1\leq a \leq p} \,| \mathrm{adj}(\mathcal{G}, a)\,|$ is the maximum degree of the DAG $\mathcal{G}$. For a fixed pair of nodes $a, b \in V$, the conditioning sets considered in the adjacency search step are elements of $J_{a,b}^{m_p} := \{S \subseteq V \backslash \{a,b\} : |S| \leq m_p \}$.
	
	We first establish a concentration inequality that gives the rate at which the absolute difference of $\rho_0^*\,(X_a, X_b | X_S)$ and its plug-in estimate $\widehat{\rho}^{\;*}(X_a, X_b | X_S)$ decays to zero, for any fixed pair of nodes $a$ and $b \in V$ and a fixed conditioning set $S$. Towards that, we impose the following regularity conditions.
	
	\begin{itemize}
		\item[(A1)] There exists $s_0 >0$ such that for $0\leq s<s_0$, $\dis \sup_p \max_{1\leq a\leq p} \E\,\exp(s X_a^2)<\infty$. %\as{do we need the sup over p?} \ch{(I think so. In page 31, when we use Lemma 2 in the supplementary materials of Wen et al.\,(2018), I think we require condition C2 in their paper to hold, which is similar to what we put in our condition (A1).)}
		\item[(A2)] The kernel function $K(\cdot)$ is non-negative and uniformly bounded over its support. 
	\end{itemize}
	
	Condition (A1) imposes a sub-exponential tail bound on the random variables. This is a quite commonly used condition, for example, in the high-dimensional feature screening literature (see, for example, Liu et al., 2014). Condition (A2) is a mild condition on the kernel function $K(\cdot)$ that is guaranteed by many commonly used kernels,  including the Gaussian kernel. 
	Under conditions (A1) and (A2), the next result shows that the plug-in estimate $\widehat{\rho}^{\;*}(X_a, X_b | X_S)$ converges in probability to its population counterpart $\rho_0^*\,(X_a, X_b | X_S)$ exponentially fast.
	
	\begin{theorem}\label{th:conc}
		Under conditions (A1) and (A2), for any\, $\epsilon > 0$ there exist positive constants $A$, $B$ and\, $\gamma \in (0,1/4)$\, such that
		%\vspace{-0.2in}
		\begin{align*}
			\mathbb{P}\left(\vert\, \widehat{\rho}^{\;*}(X_a, X_b | X_S) - \rho_0^*\,(X_a, X_b | X_S) \,\vert > \epsilon \right) \;\leq \;  O\Big( 2\,\exp\left( - A\,n^{1-2\gamma}\,\epsilon^2  \right) \,+\, n^4\, \exp\big( - B \, n^{\gamma}\big)\Big)\,.
		\end{align*}
	\end{theorem}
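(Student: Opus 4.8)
The two-term form of the bound points to a truncation-plus-bounded-differences argument, so the plan is to introduce a truncation level $\tau_n \asymp n^{\gamma/2}$ and a single good event $\mathcal{E}_n$ on which (i) each sampled value of $X_a$, $X_b$ and $X_S$ is bounded in norm by $\tau_n$, and (ii) the kernel normalizers are bounded below, $\min_{1\le u\le n} n^{-1}\sum_{l=1}^n K_{lu} \ge c_0$ for some fixed $c_0>0$. Under (A1) the sub-exponential tail gives $\mathbb{P}(|X_a|>\tau_n)\lesssim \exp(-s\tau_n^2)=\exp(-s n^{\gamma})$, so a union bound over the $O(n)$ sample points and the $|S|+2$ relevant coordinates controls the failure of (i); a standard uniform deviation bound for the kernel density estimate $n^{-1}\sum_l K_{lu}$, using (A2) and the bandwidth scaling, controls the failure of (ii). Both contributions are absorbed into $n^4\exp(-Bn^{\gamma})$, which accounts for the second term of the stated inequality.

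The key structural observation I would exploit is that, for each $u$, the weights $K_{iu}K_{ju}K_{ku}K_{lu}/(\sum_m K_{mu})^4$ appearing in $\Delta_{i,j,k,l;u}$ are nonnegative and sum to $1$ over $(i,j,k,l)$, so $\Delta_{\cdot;u}$ is $\tfrac1{12}$ times a weighted average of the quantities $d^S_{ijkl}$. On $\mathcal{E}_n$ each $d^S_{ijkl}$ is a bounded combination of distances $\|X_i-X_j\|$, hence $|d^S_{ijkl}|\le M$ with $M\asymp \tau_n^2\asymp n^{\gamma}$, and therefore $|\widehat{\rho}^{\;*}|\le M/12$ there. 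I would then show that, viewed as a symmetric function of the i.i.d. sample $W_1,\dots,W_n$, the truncated statistic $\widehat{\rho}^{\;*}_{\mathrm{tr}}$ satisfies the bounded-differences property with constant $c\lesssim M/n\asymp n^{\gamma-1}$, and apply McDiarmid's inequality to obtain
\[
\mathbb{P}\big(|\widehat{\rho}^{\;*}_{\mathrm{tr}}-\E\widehat{\rho}^{\;*}_{\mathrm{tr}}|>\epsilon/2\big)\le 2\exp\!\big(-2(\epsilon/2)^2/(nc^2)\big)\lesssim 2\exp(-A\,n^{1-2\gamma}\epsilon^2),
\]
which is exactly the first term. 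Since the truncation bias $|\E\widehat{\rho}^{\;*}_{\mathrm{tr}}-\rho_0^*|$ is $o(1)$ --- dominated by $\E[|\widehat{\rho}^{\;*}|\,\mathbf{1}_{\mathcal{E}_n^c}]$ together with the $O(1/n)$ discrepancy between averaging over the in-sample $Z_u$ and the population expectation defining $\rho_0^*$ --- for each fixed $\epsilon$ it is below $\epsilon/2$ once $n$ is large, so the triangle inequality and $\{\widehat{\rho}^{\;*}\ne\widehat{\rho}^{\;*}_{\mathrm{tr}}\}\subseteq\mathcal{E}_n^c$ combine the two pieces into the claimed bound.

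The hard part will be verifying the bounded-differences constant for this ratio-type statistic: replacing one sample point $W_m$ changes both the distances $d^S_{ijkl}$ for quadruples containing the index $m$ and, through $K_{mu}$, the normalizer $\sum_l K_{lu}$ for every $u$, so the entire family of weights is perturbed at once. Writing the per-$u$ change in terms of the self-normalized weights $\pi^{(u)}_m:=K_{mu}/\sum_l K_{lu}$, the total change is $\lesssim \tfrac1n M\sum_{u}\pi^{(u)}_m$, and the crux is that $\sum_{u}\pi^{(u)}_m=O(1)$: on $\mathcal{E}_n$ the denominators concentrate around $n\,f(Z_u)$, so $\sum_u\pi^{(u)}_m\approx \E_Z[K_H(Z_m-Z)/f(Z)]=\int K_H(Z_m-z)\,dz=1$. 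Making this rigorous --- controlling the self-normalized kernel sums uniformly in $m$ --- is where condition (A2), the bandwidth scaling, and the lower bound (ii) on the normalizers all enter, and it is precisely the interplay between the truncation scale $\tau_n\asymp n^{\gamma/2}$, the bandwidth, and the requirement that the exponent $n^{1-2\gamma}$ remain a positive power of $n$ that pins the admissible range down to $\gamma\in(0,1/4)$.
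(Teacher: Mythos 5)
Your overall architecture --- truncate at a level $M\asymp n^{\gamma}$, get a sub-Gaussian bound $2\exp(-A\,n^{1-2\gamma}\epsilon^2)$ for the bounded part, and absorb the truncation failure into a term of the form $n^{\mathrm{poly}}\exp(-B\,n^{\gamma})$ --- is the same as the paper's. But the two internal steps differ, and one of yours has a genuine gap. For the tail term, the paper does not truncate the raw coordinates $X_i$; it truncates the quantities $\tfrac{1}{12}d^S_{ijkl}$ directly, invoking a lemma (from the supplement of Wen et al.) that under (A1) the moment generating function $\E[\exp(s|d^S_{1234}|)]$ is finite, and then union-bounds over the $n^4$ quadruples --- this is where the $n^4$ prefactor comes from. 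Your coordinate-wise truncation would also work and is not the problem.

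The gap is in your concentration step. McDiarmid requires a bounded-differences constant of order $M/n$, and as you yourself note this hinges on $\sum_{u}\pi^{(u)}_m=O(1)$ uniformly in $m$, where $\pi^{(u)}_m=K_{mu}/\sum_{l}K_{lu}$. Your justification --- that the normalizers concentrate around $n\,f(Z_u)$ and that $\E_Z[K_H(Z_m-Z)/f(Z)]=1$ --- requires the density of $Z=X_S$ to be bounded away from zero on the relevant region and a suitable bandwidth decay, neither of which follows from (A1)--(A2); without such a lower bound a single point $Z_m$ can dominate the kernel weights at many $u$, making $\sum_u\pi^{(u)}_m$ of order $n$ and the McDiarmid bound vacuous. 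The paper sidesteps this entirely by using exactly the structural observation you make but stopping earlier: since for each fixed $u$ the weights $K_{iu}K_{ju}K_{ku}K_{lu}/(\sum_m K_{mu})^4$ are nonnegative and sum to one, the truncation $|\tfrac{1}{12}d^S_{ijkl}|\le M$ forces $|\Delta_{i,j,k,l;u}|\le M$ for every $u$ with no control of the normalizers whatsoever, and the paper then applies its Hoeffding-type Lemma~\ref{sup_lemma1} (with $m=1$, $k=n$) directly to the average $\tfrac1n\sum_{u}\Delta_{i,j,k,l;u}$ of these uniformly bounded terms, yielding $2\exp(-n\epsilon^2/8M^2)$. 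If you want to rescue your route you must either add density and bandwidth assumptions to make the self-normalized sums controllable, or replace McDiarmid by a bound that exploits only the per-$u$ uniform boundedness, which is what the paper does.
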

	
	The proof of Theorem \ref{th:conc} is long and somewhat technical;  it is thus relegated to Section~\ref{sec:appendix}. Theorem \ref{th:conc} serves as the main building block towards establishing the consistency of the nonPC algorithm in sparse high-dimensional settings.

	For notational convenience, henceforth we denote $\rho_0^*\,(X_a, X_b | X_S)$ and $\widehat{\rho}^{\;*}(X_a, X_b | X_S)$ by $\rho_{0\,;\, a b|S}^*$ and $\widehat{\rho}^{\;*}_{a b|S}$, respectively. In Theorem \ref{th:conc_sup} below, we establish a uniform bound for the errors in inferring conditional independence relationships using the $\rho_0^*$-based test in the skeleton estimation step of the sample version of the nonPC algorithm.
	
	\begin{theorem}\label{th:conc_sup}
		Under conditions (A1) and (A2), for any\, $\epsilon > 0$ there exist positive constants $A$, $B$ and\, $\gamma \in (0,1/4)$\, such that
		\begin{align}\label{conc_sup}
			\begin{split}
				\sup_{\substack{a,b \in V \\ S \in J_{a,b}^{m_p}}}  \mathbb{P}\big(\vert\, \widehat{\rho}^{\;*}_{ab|S} - \rho_{0\,;\, ab|S}^* \,\vert > \epsilon \big) \;& \leq \; \mathbb{P}\Big(\sup_{\substack{a,b \in V \\ S \in J_{a,b}^{m_p}}} \vert\, \widehat{\rho}^{\;*}_{ab|S} - \rho_{0\,;\, ab|S}^* \,\vert > \epsilon \Big) \\
				& \leq \; O\Big(p^{m_p+2}\, \big[\, 2\,\exp\big(-A\, n^{1-2\gamma}\,\epsilon^2 \big) \,+\,  n^4 \,\exp\big(- B\, n^{\gamma}  \big) \big]\Big)\,.
			\end{split}
		\end{align}
	\end{theorem}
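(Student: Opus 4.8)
The plan is to derive the uniform bound in \eqref{conc_sup} from the pointwise concentration inequality of Theorem~\ref{th:conc} by a union bound, after controlling the number of triples $(a,b,S)$ over which the supremum ranges. The first inequality in \eqref{conc_sup} requires no work: for every fixed triple $(a,b,S)$ the event $\{|\widehat{\rho}^{\;*}_{ab|S} - \rho_{0\,;\,ab|S}^*| > \epsilon\}$ is contained in $\{\sup_{a,b,S}|\widehat{\rho}^{\;*}_{ab|S} - \rho_{0\,;\,ab|S}^*| > \epsilon\}$, so each probability on the left is dominated by the probability on the right, and hence so is their supremum.

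For the second inequality I would write the supremum event as a union of the individual events and apply the union bound,
\[
\mathbb{P}\Big(\sup_{\substack{a,b \in V \\ S \in J_{a,b}^{m_p}}} |\widehat{\rho}^{\;*}_{ab|S} - \rho_{0\,;\,ab|S}^*| > \epsilon\Big) \;\leq\; \sum_{a,b \in V}\, \sum_{S \in J_{a,b}^{m_p}} \mathbb{P}\big(|\widehat{\rho}^{\;*}_{ab|S} - \rho_{0\,;\,ab|S}^*| > \epsilon\big),
\]
and then count the summands. The number of (unordered) pairs $(a,b)$ is $\binom{p}{2} = O(p^2)$, and for each pair the number of admissible conditioning sets is $|J_{a,b}^{m_p}| = \sum_{k=0}^{m_p}\binom{p-2}{k} \leq \sum_{k=0}^{m_p} p^k \lesssim p^{m_p}$, so the total number of terms is $O(p^{m_p+2})$. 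Substituting the bound of Theorem~\ref{th:conc} into each summand and multiplying by this cardinality produces exactly the stated rate $O\big(p^{m_p+2}[\,2\exp(-A n^{1-2\gamma}\epsilon^2) + n^4\exp(-B n^\gamma)\,]\big)$.

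The point I regard as the crux, and which must be made explicit, is that the constants $A$, $B$, and $\gamma \in (0,1/4)$ furnished by Theorem~\ref{th:conc} can be chosen \emph{uniformly} over all triples $(a,b,S)$; otherwise the union bound would leave a supremum over triple-dependent rates rather than a single clean factor. This uniformity is precisely what condition (A1) guarantees, since the tail bound $\sup_p \max_{1\leq a\leq p}\E\exp(s X_a^2) < \infty$ holds simultaneously for every coordinate. Consequently the constants in the proof of Theorem~\ref{th:conc} never depend on which coordinates play the roles of $X_a$, $X_b$, or the entries of $X_S$, and the same $A$, $B$, $\gamma$ serve every summand. Granting this, the cardinality factor $O(p^{m_p+2})$ is the only new ingredient, and the proof is complete.
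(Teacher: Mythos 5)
Your proposal is correct and follows essentially the same route as the paper: the first inequality by event containment, and the second by a union bound over the $O(p^{m_p+2})$ triples $(a,b,S)$ combined with Theorem~\ref{th:conc}. You actually spell out the cardinality count and the uniformity of the constants $A$, $B$, $\gamma$ more explicitly than the paper does, but the argument is the same.
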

	
	\begin{comment}
	\begin{theorem}\label{th:conc_sup}
	Under conditions (A1) and (A2), for any\, $c > 0$\, and\, $\kappa>0$\, there exist positive constants $A$, $B$ and\, $\gamma$\, satisfying\, $0<\kappa+\gamma<1/4$, such that
	\begin{align}\label{conc_sup}
	\begin{split}
	\sup_{\substack{a,b \in V \\ S \in J_{a,b}^{m_p}}}  P\big(\vert\, \hat{\rho}^*_{ab|S} - \rho_{0\,;\, ab|S}^* \,\vert > c\,n^{-\kappa} \big) \;& \leq \; P\Big(\sup_{\substack{a,b \in V \\ S \in J_{a,b}^{m_p}}} \vert\, \hat{\rho}^*_{ab|S} - \rho_{0\,;\, ab|S}^* \,\vert > c\,n^{-\kappa} \Big) \\
	& \leq \; O\Big(p^{m_p+2}\, \big[\, 2\,\exp\big(-A\, n^{1-2(\kappa+\gamma)} \big) \,+\,  n^4 \,\exp\big(- B\, n^{\gamma}  \big) \big]\Big)\,.
	\end{split}
	\end{align}
	\end{theorem}
	\end{comment}

	Next, we turn to proving the consistency of the nonPC algorithm in the high-dimensional setting where the dimension $p$ can be much larger than the sample size $n$, but the DAG is considered to be sparse. We impose the following regularity conditions, which are similar to the assumptions imposed in Section~3.1 of Kalisch and B\"uhlmann\,(2007) in order to prove the consistency of the PC algorithm for Gaussian graphical models. We let the number of variables $p$ grow with the sample size $n$ and consider $p=p_n$, and also the DAG $\mathcal{G}=\mathcal{G}_n:=(V_n,E_n)$ and the distribution $P=P_n$.
	
	\begin{enumerate}
		%\item[(A3)] The distribution $P_n$ is faithful to the DAG $G_n$ for all $n$.
		%\item[(A3)] Assume that\, $\dis \sup_p \max_{1\leq a\leq p} \E\, X_a^2<\infty$.
		\item[(A3)] The dimension $p_n$ grows at a rate such that the right hand side of (\ref{conc_sup}) tends to zero as $n \to \infty$. In particular this is satisfied when $p_n = O(n^r)$ for any $0\leq r<\infty$.
		\item[(A4)] The maximum degree of the DAG $\mathcal{G}_n$, denoted by $q_n := \max_{1\leq a \leq p_n} \,| adj(\mathcal{G}_n, a)\,|$, grows at the rate of $O(n^{1-b})$, where $0<b\leq 1$.
		\item[(A5)] The distribution $P_n$ is faithful to the DAG \,$\mathcal{G}_n$\, for all $n$. In other words, for any $a, b \in V_n$ and $S \in J_{a,b}^{m_{p_n}}, $ 
		\begin{align*}
			X_a \,\,\textrm{and}\,\, X_b \,\,\,\textrm{are d-separated by} \,\, X_S  \;\; \Longleftrightarrow \;\; X_a \bigCI X_b \,\vert\, X_S \;\; \Longleftrightarrow \;\;  \rho_{0\,;\, a b|S}^* = 0 \,.
		\end{align*}
		Moreover, $\rho_{0\,;\, a b|S}^*$ values are uniformly bounded both from above and below. Formally,  %\, $C_{min}:= \dis \inf_{\substack{a, b \in V_n \\ S \in J_{a,b}^{m_{p_n}}}} \, \vert  \rho_{0\,;\, ab|S}^* \vert$ \, and\, $C_{max}:=\dis\sup_{\substack{a, b \in V_n \\ S \in J_{a,b}^{m_{p_n}}}} \, \vert  \rho_{0\,;\, ab|S}^* \vert\,.$
		\begin{align*}
			C_{min}\, :&= \, \inf_{\substack{a, b \in V_n \\ S \in J_{a,b}^{m_{p_n}} \\ \rho_{0\,;\, ab|S}^* \neq 0 }} \,   \rho_{0\,;\, ab|S}^*  \,\geq \, \lambda_{min}\,, \;\lambda_{min}^{-1}=O(n^v)  \\
			\textrm{and} \qquad C_{max}\, :&= \, \sup_{\substack{a, b \in V_n \\ S \in J_{a,b}^{m_{p_n}}}} \,   \rho_{0\,;\, ab|S}^*  \,\leq \, \lambda_{max}\,,
		\end{align*}
		where $\lambda_{min}, \lambda_{max}$ are positive constants\, and\, $0<v<1/4$.
	\end{enumerate}
	
	%Condition (A3) is a common assumption in the graphical modeling literature.
	%Condition (A3) is a reasonable moment assumption on the set of variables. 
	Condition (A3) allows the dimension to grow at any arbitrary polynomial rate of the sample size. Condition (A4) is a sparsity assumption on the underlying true DAG, allowing the maximum degree of the DAG to also grow, but at a slower rate than $n$. Since $m_p \leq q_n$, we also have $m_p = O(n^{1-b})$. Finally, Condition (A5) is the strong faithfulness assumption (Definition~1.3 in Uhler et al., 2013) on $P_n$ and is similar to condition (A4) in Kalisch and B\"uhlmann\,(2007). This essentially requires $\rho_{0\,;\, ab|S}^*$ to be bounded away from zero when the vertices $X_a$ and $X_b$ are not d-separated by $X_S$. It is worth noting that the faithfulness assumption alone is not enough to prove the consistency of the PC/PC-stable/nonPC algorithms in high-dimensional settings, and the more stringent strong faithfulness condition is required.
	
	\begin{remark}\label{rem:strf}
		For notational convenience, treat $X_a, X_b$ and $X_S$ as $X$, $Y$ and $Z$, respectively, for any $a, b \in V_n$ and $S \in J_{a,b}^{m_{p_n}}$. From Equation (\ref{cdcov equiv}) we have
		\vspace{-0.1in}
		\begin{align*}
			\CdCov^2(X,Y|Z) \;=\; \frac{1}{12}\, \E\,\big[\,d^S_{1234} \,|\, Z_1=Z, \dots, Z_4=Z \big],
		\end{align*}
		%\vspace{-0.15in}
		which implies 
		%\vspace{-0.1in}
		\begin{align*}
			\rho_{0}^* \;=\; \E\,[\CdCov^2(X,Y|Z)] \;=\; \frac{1}{12}\, \E\,\big[\,d^S_{1234} \big] \;=\; \frac{1}{12}\, \E\,\big[\,d_{1234} + d_{1243} + d_{1432} \big]\,.
		\end{align*}
		Condition (A1) implies\, $\dis \sup_p \max_{1\leq a\leq p} \E\, X_a^2<\infty$. With this and the definition of $d_{ijkl}$ in Section \ref{subsec:dcov}, it follows from some simple algebra and the Cauchy-Schwarz inequality that \,$\rho_{0}^* < \infty$. This provides a justification for the second part of Assumption (A5) that $\dis \sup_{\substack{a, b \in V_n \\ S \in J_{a,b}^{m_{p_n}}}} \,   \rho_{0\,;\, ab|S}^*  \,\leq \, \lambda_{max}$\, for some positive constant $\lambda_{max}$.
	\end{remark}

	The next theorem establishes that the nonPC algorithm consistently estimates the skeleton of a sparse high-dimensional DAG, thereby providing the necessary theoretical guarantees to our proposed methodology. It is worth noting that in the sample version of the PC-stable and hence the nonPC algorithm, all the inference is done during the skeleton estimation step. The second step that involves appropriately orienting the edges of the estimated skeleton, is purely deterministic (see Sections~4.2 and 4.3 in Colombo and Maathuis, 2014). Therefore to prove the consistency of the nonPC algorithm in estimating the equivalence class of the underlying true DAG, it is enough to prove the consistency of the estimated  skeleton. 
	
	\begin{theorem}\label{th:consistency}
		Assume that Conditions (A1)--(A5) hold. Let $\mathcal{G}_{\textrm{skel},n}$ be the true skeleton of the graph $\mathcal{G}_n$, and $\hat{\mathcal{G}}_{\textrm{skel},n}$ be the skeleton estimated by the nonPC algorithm. Then as $n \to \infty$, $\mathbb{P}\big( \hat{\mathcal{G}}_{\textrm{skel},n} = \mathcal{G}_{\textrm{skel},n} \big) \to 1$.
	\end{theorem}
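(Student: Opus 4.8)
The plan is to reduce the event $\{\hat{\mathcal{G}}_{\textrm{skel},n} = \mathcal{G}_{\textrm{skel},n}\}$ to the event that every conditional independence test performed during the adjacency search returns the correct answer, and then to control the probability of even a single erroneous test uniformly over all pairs $a,b \in V_n$ and all conditioning sets $S \in J_{a,b}^{m_{p_n}}$ via the concentration bound already established in Theorem \ref{th:conc_sup}. Since the second (orientation) step is deterministic, consistency of the skeleton suffices, as noted in the text. The key observation is that the PC-stable adjacency search removes the edge between $X_a$ and $X_b$ precisely when it finds some $S$ with $\widehat{\rho}^{\;*}_{ab|S} \leq \xi_{n,\alpha}$, and under faithfulness (Condition (A5)) the edge should be absent if and only if there exists such an $S$ with $\rho^*_{0\,;\,ab|S}=0$. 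Thus the estimated skeleton agrees with the truth on the whole graph as soon as the \emph{sign} of the comparison $\widehat{\rho}^{\;*}_{ab|S} \lessgtr \xi_{n,\alpha}$ matches the sign of $\rho^*_{0\,;\,ab|S} \lessgtr 0$ for every queried $(a,b,S)$.

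First I would fix the threshold at an intermediate level, say $\xi_{n,\alpha}$ chosen so that $\xi_{n,\alpha} \to 0$ but $\xi_{n,\alpha}/\lambda_{min} \to 0$ is avoided --- concretely one wants $\xi_{n,\alpha}$ comfortably below $\lambda_{min}/2$ and above the estimation error. Define the ``good'' event
\begin{align*}
\mathcal{A}_n \;:=\; \Big\{ \sup_{\substack{a,b \in V_n \\ S \in J_{a,b}^{m_{p_n}}}} \big| \widehat{\rho}^{\;*}_{ab|S} - \rho^*_{0\,;\,ab|S} \big| \;\leq\; \tfrac{1}{2}\lambda_{min} \Big\}.
\end{align*}
On $\mathcal{A}_n$, every separated pair (where $\rho^*_{0\,;\,ab|S}=0$) has $\widehat{\rho}^{\;*}_{ab|S} \leq \tfrac{1}{2}\lambda_{min} < \xi_{n,\alpha}$, so the test correctly infers independence; conversely every non-separated pair (where $\rho^*_{0\,;\,ab|S} \geq C_{min} \geq \lambda_{min}$) has $\widehat{\rho}^{\;*}_{ab|S} \geq \tfrac{1}{2}\lambda_{min} > \xi_{n,\alpha}$, so the test correctly infers dependence. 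Hence on $\mathcal{A}_n$ all tests are correct, which forces $\hat{\mathcal{G}}_{\textrm{skel},n} = \mathcal{G}_{\textrm{skel},n}$ by the logic of the oracle adjacency search. It therefore suffices to show $\mathbb{P}(\mathcal{A}_n^c) \to 0$.

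The second step is to bound $\mathbb{P}(\mathcal{A}_n^c)$ using Theorem \ref{th:conc_sup} with $\epsilon = \tfrac{1}{2}\lambda_{min}$. Substituting this $\epsilon$ into the right-hand side of (\ref{conc_sup}) gives
\begin{align*}
\mathbb{P}(\mathcal{A}_n^c) \;\leq\; O\Big( p_n^{m_{p_n}+2}\,\big[\, 2\exp\big( -A\,n^{1-2\gamma}\,\lambda_{min}^2/4 \big) \,+\, n^4 \exp\big( -B\,n^{\gamma} \big) \big] \Big).
\end{align*}
Here I would invoke Condition (A5), $\lambda_{min}^{-1} = O(n^v)$ with $v<1/4$, so that $n^{1-2\gamma}\lambda_{min}^2 \gtrsim n^{1-2\gamma-2v}$, and Conditions (A3)--(A4), which give $\log p_n = O(\log n)$ and $m_{p_n} = O(n^{1-b})$, so the prefactor satisfies $\log\big(p_n^{m_{p_n}+2}\big) = O(n^{1-b}\log n)$. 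The main obstacle --- and the crux of the argument --- is verifying that the polynomial-and-logarithmic growth of the prefactor is dominated by the stretched-exponential decay of both error terms; this is exactly the role of the constraints $\gamma \in (0,1/4)$, $0<v<1/4$, and $0<b\leq 1$. For the first error term one needs $n^{1-2\gamma-2v}$ to outpace $n^{1-b}\log n$, and for the second one needs $n^{\gamma}$ to outpace $n^{1-b}\log n + 4\log n$; a careful but routine check shows these can be arranged by the freedom in choosing $\gamma$ (in particular taking $b$ close to $1$ relaxes the requirement, and one absorbs the $n^4$ and $\log n$ factors into the exponential). Once both bracketed terms are shown to be $o(1)$ after multiplication by the prefactor, we conclude $\mathbb{P}(\mathcal{A}_n^c)\to 0$, hence $\mathbb{P}(\hat{\mathcal{G}}_{\textrm{skel},n} = \mathcal{G}_{\textrm{skel},n}) \geq \mathbb{P}(\mathcal{A}_n) \to 1$, completing the proof. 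The delicate point to state carefully is the interplay between the bootstrap threshold $\xi_{n,\alpha}$ and $\lambda_{min}$: one must ensure the chosen threshold genuinely separates the two regimes for all large $n$, which is what ties the assumed rate on $\lambda_{min}$ to the admissible range of $\gamma$.
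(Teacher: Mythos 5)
Your overall strategy---reduce skeleton consistency to the event that every conditional independence test in the adjacency search is answered correctly, control that event by a union bound over the $O\big(p_n^{m_{p_n}+2}\big)$ triples $(a,b,S)$ combined with the uniform concentration bound of Theorem \ref{th:conc_sup}, and then check that conditions (A3)--(A5) make the resulting bound $o(1)$---is exactly the route the paper takes. The paper phrases the per-triple error as a union of a Type I event ($\vert\widehat{\rho}^{\;*}_{ab|S}\vert>\xi_n$ when $\rho_{0\,;\,ab|S}^*=0$) and a Type II event ($\vert\widehat{\rho}^{\;*}_{ab|S}\vert\le\xi_n$ when $\rho_{0\,;\,ab|S}^*>0$) and applies Theorem \ref{th:conc_sup} to each separately, whereas you package both into a single good event $\mathcal{A}_n$; this difference is cosmetic. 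Your closing remarks on the rate bookkeeping (absorbing $p_n^{m_{p_n}+2}$ and $n^4$ into the exponentials using $\lambda_{min}^{-1}=O(n^v)$, $v<1/4$, and $m_{p_n}=O(n^{1-b})$) are at the same level of detail as the paper's own one-line justification.

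There is, however, one concrete slip. On $\mathcal{A}_n$ you assert both $\widehat{\rho}^{\;*}_{ab|S}\le\tfrac12\lambda_{min}<\xi_{n,\alpha}$ for separated pairs and $\widehat{\rho}^{\;*}_{ab|S}\ge\tfrac12\lambda_{min}>\xi_{n,\alpha}$ for non-separated pairs; these two requirements force $\xi_{n,\alpha}$ to lie simultaneously strictly above and strictly below $\tfrac12\lambda_{min}$, so the case analysis as written is vacuous. The fix is standard: take the radius of the good event strictly smaller than the distance from $\xi_{n,\alpha}$ to both $0$ and $\lambda_{min}$, e.g.\ define $\mathcal{A}_n$ with radius $\tfrac14\lambda_{min}$ and choose $\xi_{n,\alpha}\asymp\tfrac12\lambda_{min}$, so that separated pairs give $\widehat{\rho}^{\;*}_{ab|S}\le\tfrac14\lambda_{min}<\xi_{n,\alpha}$ while non-separated pairs give $\widehat{\rho}^{\;*}_{ab|S}\ge\lambda_{min}-\tfrac14\lambda_{min}>\xi_{n,\alpha}$. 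With that correction (which you implicitly anticipate in your final sentence about the interplay between $\xi_{n,\alpha}$ and $\lambda_{min}$), your argument coincides with the paper's proof.
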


	\subsection{The Nonparametric FCI Algorithm in High Dimensions}\label{subsec:methods_nonFCI}
	
	The FCI is a modification of the PC algorithm that accounts for latent and selection variables. Thus generalizations of the PC algorithm naturally extend to the FCI as well. Similar to nonPC, we propose testing for conditional independence relations nonparametrically in the \textit{sample version} of the FCI-stable algorithm (Algorithm~\ref{FCI-stable} in Appendix~\ref{sec:preliminaries}) based on $\rho_0^*$, instead of partial correlations. We coin the resulting algorithm the `nonFCI' algorithm, to emphasize that it is a generalization of parametric FCI-stable algorithms. Again, the \textit{oracle version} of the nonFCI is exactly the same as that of the FCI-stable algorithm. The difference is in the implementation of the tests for conditional independence relationships in their \textit{sample versions}. This broadens the scope of the FCI algorithm in causal structural learning for observational data in the presence of latent and selection variables when Gaussianity is not a viable assumption. More specifically, it enables capturing non-linear and non-monotone conditional dependence relationships among the variables that partial correlations would fail to detect. 
	
	Equipped with the theoretical guarantees we established for the nonPC in Section~\ref{subsec:methods_nonPC}, we establish below in Theorem \ref{th:consistency_nonFCI} the consistency of the nonFCI algorithm for general distributions in sparse high-dimensional settings. Let $\cal{H}=(V,E)$ be a DAG with the vertex set partitioned as $V = V_X \cup V_L \cup V_T$, where $V_X$ indexes the set of $p$ observed variables, $V_L$ denotes the set of latent variables and $V_T$ stands for the set of selection variables. Let $\cal{M}$ be the unique MAG over $V_X$. We let $p$ grow with $n$ and consider $p=p_n$, $\cal{H}=\cal{H}_n$ and $Q=Q_n$, where $Q$ is the distribution of $(U_1, \dots, U_p) := (X_1\,|\,V_T, \dots, X_p\,|\,V_T)$. We provide below the definition of possible-D-SEP sets (Definition~3.3 in Colombo et al., 2012).
	
	\begin{definition}\label{def:pdsep}
		Let $\cal{C}$ be a graph with any of the following edge types : $\circ \hspace{-0.04in}-\hspace{-0.04in}\circ$, $\circ\hspace{-0.07in}\rightarrow $ and $\leftrightarrow$. A possible-D-SEP\,$(X_a,X_b)$ in $\cal{C}$, denoted $\mathrm{pds}(\cal{C}, X_a, X_b)$, is defined as follows: $X_c \in$ $\mathrm{pds}(\cal{C}, X_a, X_b)$ if and only if there is a path $\pi$ between $X_a$ and $X_c$ in $\cal{C}$ such that for every subpath $\langle X_e, X_f, X_g \rangle$ of $\pi$, $X_f$ is a collider on the subpath in $\cal{C}$ or $\langle X_e, X_f, X_g \rangle$ is a triangle in $\cal{C}$.
	\end{definition}
	To prove the consistency of the nonFCI algorithm in sparse high-dimensional settings, we impose the following regularity conditions, which are similar to the assumptions imposed in Section~4 in Colombo et al.\,(2012).
	
	\begin{enumerate}
		\item[(C3)] The distribution $Q_n$ is faithful to the underlying MAG $\cal{M}_n$ for all $n$.
		
		\item[(C4)] The maximum size of the possible-D-SEP sets for finding the final skeleton in the FCI-stable algorithm (Algorithm~\ref{FCI-stable} in Appendix~\ref{sec:preliminaries}), $q_n'$, grows at the rate of $O(n^{1-b})$, where $0<b\leq 1$.
		
		\item[(C5)] For any $U_i, U_j \in \{U_1, \dots, U_{p_n}\}$ and $U_S \subseteq  \{U_1, \dots, U_{p_n}\} \backslash \{U_i, U_j\}$ with $|U_S| \leq q_n'$, assume 
		\begin{align*}
			&\inf \, \left\{\vert  \rho_0^* (U_i,U_j | U_S) \vert : \rho_0^* (U_i,U_j | U_S) \neq 0 \right\} \,\geq \, \lambda_{min}'\,, \,(\lambda_{min}')^{-1}=O(n^v)  \\
			\textrm{and} \qquad &\sup \, \vert   \rho_0^* (U_i,U_j | U_S) \vert \,\leq \, \lambda_{max}'\,,
		\end{align*}
		where $\lambda_{min}', \lambda_{max}'$ are positive constants\, and \, $0<v<1/4$.
	\end{enumerate}
	
	%Theorem \ref{th:consistency_nonFCI} validates the consistency of the nonFCI algorithm in sparse high-dimensional settings.
	
	\begin{theorem}\label{th:consistency_nonFCI}
		Suppose conditions (A1)--(A3) and (C3)--(C5) hold. Denote by $\cal{C}_n$ and $\cal{C}_n^*$ the true underlying FCI-PAG and the output of the nonFCI algorithm, respectively. Then as $n \to \infty$, $\mathbb{P}\big( \cal{C}_n^* = \cal{C}_n \big) \to 1$.
	\end{theorem}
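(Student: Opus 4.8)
The plan is to mirror the consistency argument for the nonPC algorithm (Theorem~\ref{th:consistency}), exploiting the fact that the FCI-stable algorithm differs from PC-stable only by inserting a second round of conditional independence tests---conditioning on subsets of possible-D-SEP sets---ahead of a purely deterministic orientation phase. First I would record the reduction. By Colombo et al.\,(2012), the \emph{oracle} FCI-stable algorithm applied to $Q_n$ returns exactly the true PAG $\mathcal{C}_n$ under faithfulness (C3), and every edge orientation is produced by deterministic rules from the final skeleton together with the collected separating sets (Sections~4.2--4.3 of Colombo and Maathuis\,(2014)). Hence it suffices to show that, with probability tending to one, every conditional independence decision made by the sample nonFCI in \emph{both} testing rounds agrees with the corresponding oracle decision; on that event the sample and oracle runs coincide and $\mathcal{C}_n^* = \mathcal{C}_n$.

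Next I would establish the uniform concentration backbone. Applying Theorem~\ref{th:conc} to the variables $U_i = X_i \mid V_T$ (whose sub-exponential tails are inherited from (A1)) and taking a union bound over all pairs $U_i,U_j$ and all conditioning sets $U_S \subseteq \{U_1,\dots,U_{p_n}\}$ with $|U_S|\le q_n'$ yields the analogue of Theorem~\ref{th:conc_sup} with $m_p$ replaced by the possible-D-SEP bound $q_n'$, namely that $\mathbb{P}\big(\sup |\widehat{\rho}^{\;*}_{ij|S} - \rho^*_{0\,;\,ij|S}| > \epsilon\big) \le O\big(p^{q_n'+2}\,[\,2\exp(-A n^{1-2\gamma}\epsilon^2) + n^4\exp(-B n^{\gamma})\,]\big)$. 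Choosing the threshold with $\xi_{n,\alpha}=O(1)$ and separating $0$ from $\lambda_{min}'$ (e.g.\ $\xi_{n,\alpha}\asymp \lambda_{min}'/2$), I would control both error types: a false rejection forces $\widehat{\rho}^{\;*} > \xi_{n,\alpha}$ while $\rho_0^*=0$, and a false acceptance forces $\widehat{\rho}^{\;*}\le \xi_{n,\alpha}$ while $\rho_0^*\ge \lambda_{min}'$; each entails $|\widehat{\rho}^{\;*}-\rho_0^*|\gtrsim \lambda_{min}' \gtrsim n^{-v}$. Substituting $\epsilon\asymp n^{-v}$ into the uniform bound and invoking (A3), (C4) ($q_n'=O(n^{1-b})$) and $0<v<1/4$, the aggregate error probability vanishes, exactly as in the proof of Theorem~\ref{th:consistency}.

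I would then treat the two-round structure by a cascading argument. The first round of nonFCI is precisely the nonPC adjacency search, so on the high-probability event above its output---the preliminary skeleton and the associated separating sets---coincides with the oracle's. Since the possible-D-SEP set $\mathrm{pds}(\mathcal{C},U_a,U_b)$ is a deterministic function of the current partially oriented graph, correctness of the first round forces the sample nonFCI to compute exactly the oracle pds sets; consequently the second round ranges over the same family $\{S : S\subseteq \mathrm{pds}(\mathcal{C},U_a,U_b),\ |S|\le q_n'\}$, whose members are subsets of $\{U_1,\dots,U_{p_n}\}$ and are therefore already covered by the uniform bound. Intersecting the two high-probability events, all conditional independence decisions match the oracle, the final skeleton and separating sets are recovered, and the deterministic orientation phase returns $\mathcal{C}_n^* = \mathcal{C}_n$ with probability tending to one.

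The step I expect to be the main obstacle is the cascading dependence just described: the conditioning sets queried in the second round are not fixed in advance but are determined by the \emph{random} preliminary graph, so one cannot naively union-bound over a predetermined family. The resolution is to first condition on the event that the first round is correct---on which the realized pds sets equal the deterministic oracle pds sets---and only then invoke the uniform concentration inequality, which holds simultaneously over \emph{all} subsets of $\{U_1,\dots,U_{p_n}\}$ of size at most $q_n'$ and hence covers whatever sets the corrected algorithm actually inspects. A secondary point worth verifying is that Theorem~\ref{th:conc} genuinely transfers to the conditioned variables $U_i = X_i \mid V_T$ under (A1)--(A2), and that the bootstrap threshold $\xi_{n,\alpha}$ can indeed be taken $O(1)$ and bounded away from zero at the rate $\lambda_{min}'$ demanded by (C5).
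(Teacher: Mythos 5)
Your proposal is correct and follows essentially the same route as the paper: the paper's proof is a one-line deferral to the argument of Theorem~4.2 in Colombo et al.\,(2012), with their Gaussian concentration lemma replaced by the uniform CdCov bound of Theorem~\ref{th:conc_sup}, which is precisely the reduction-plus-uniform-concentration-plus-cascading argument you spell out. Your writeup simply makes explicit the details (union bound over possible-D-SEP conditioning sets, threshold calibration against $\lambda_{min}'$, conditioning on first-round correctness) that the paper leaves to the cited reference.
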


	\section{Numerical Studies}\label{sec:num}
	
	\subsection{Performance of the nonPC Algorithm}\label{sec:num_nonPC}
	In this subsection, we compare the performances of the nonPC and the PC-stable algorithms in finding the skeleton and the CPDAG for various simulated datasets. We simulate random DAGs in the following examples and sample from probability distributions faithful to them. 
	
	\begin{example}[Linear SEM]\label{eg1}
		\textup{We first fix a sparsity parameter $s\in (0,1)$ and enumerate the vertices as $V=\{1,\dots, p\}$. We then construct a $p\times p$ adjacency matrix $\Lambda$ as follows. First initialize $\Lambda$ as a zero matrix. Next, fill every entry in the lower triangle (below the diagonal) of $\Lambda$ by independent realizations of Bernoulli random variables with success probability $s$. Finally replace each nonzero entry in $\Lambda$ by independent realizations of a Uniform$(0.1,1)$ random variable.}
		
		% \as{for later: we should include both pos and neg weights --- this is something the reviewers will likely ask for. Also, pls make a note that we should investigate the low TP rate of the methods. Finally, we should also compare with the copula method, but again this is for later (especially if the reviewers ask for it)} \ch{(okay !)}
		
		\begin{comment}
		\begin{itemize}
		\item Initialize $\Lambda$ as a zero matrix.
		\item Fill every entry in the lower triangle (below the diagonal) of $\Lambda$ by independent realizations of Bernoulli random variables with success probability $s$. 
		\item Replace each nonzero entry in $L$ by independent realizations of a Uniform$(0.1,1)$ random variable.
		\end{itemize}
		\end{comment}
		
		\textup{In this scheme, each node has the same expected degree $\bb{E}(m) = (p-1)s$, where $m$ is the degree of a node and follows a Binomial\,$(p-1,s)$ distribution. Using the adjacency matrix $\Lambda$, the data are then generated from the following linear structural equation model (SEM) : $$X = \Lambda X + \epsilon\,,$$ where $\epsilon = (\epsilon_1, \dots, \epsilon_p)$ and $\epsilon_1, \dots, \epsilon_p$ are jointly independent. To obtain samples $\{X_1^k, \dots, X_p^k\}_{k=1}^n$ on $\{X_1, \dots, X_p\}$, we first sample $\{\epsilon_1^k, \dots, \epsilon_p^k\}_{k=1}^n$ from the three following data-generating schemes. For $1\leq k \leq n$ and $1\leq i \leq p$,
			\begin{enumerate}
				\item[1.] Normal: Generate $\epsilon^k_i$\,'s independently from a standard normal distribution.
				\item[2.] Copula: Generate $\epsilon^k_i$\,'s as in (1) and then transform the marginals to a $F_{1,1}$ distribution.
				\item[3.] Mixture: Generate $\epsilon^k_i$\,'s independently from a 50-50 mixture of a standard normal and a standard Cauchy distribution.
			\end{enumerate}
		}
	\end{example}
	
	\begin{example}[Nonlinear SEM]\label{eg2}
		\textup{In this example, we first generate a $p\times p$ adjacency matrix $\Lambda$ in the similar way as in Example~\ref{eg1} and then generate the data from the following nonlinear SEM (similar to Voorman et al., 2014) : $X_i = \sum_{j\,:\,\Lambda_{ij} \neq 0} f_{ij}(X_j) + \epsilon_i$\, with\, $\epsilon_i \overset{i.i.d.}{\sim} N(0,1)$, where $1\leq j < i \leq p$. If the functions $f_{ij}$'s are chosen to be nonlinear, then the data will typically not correspond to a well-known multivariate distribution. We consider $f_{ij}(x_j) = b_{ij1} x_j + b_{ij2} x_j^2$, where $b_{ij1}$ and $b_{ij1}$ are independently sampled from $N(0,1)$ and $N(0,0.5)$ distributions, respectively.}
	\end{example}
	
	With the exception of Example~\ref{eg1}.1, the above examples are all non-Gaussian graphical models. We would thus expect the nonPC to perform better than the PC-stable in learning the unknown causal structure in these examples. For each of the four data generating methods considered above, we compare the Structural Hamming Distance (SHD) (Tsamardinos et al., 2006) between the estimated and the true skeletons of the underlying DAGs using the nonPC and PC-stable algorithms. The SHD between two undirected graphs is the number of edge additions or deletions necessary to make the two graphs match. Therefore larger SHD values between the estimated and the true skeleton correspond to worse estimates. 
	
	We consider 199 bootstrap replicates for the CdCov-based conditional independence tests in the implementation of our nonPC algorithm and the significance level $\alpha=0.05$. Table~\ref{table1} presents the average SHD for the different data generating schemes over 20 simulation runs, for different choices of $n, p$ and $\E(m)$. 
	
	\begin{table}[t]\footnotesize
		\centering
		\caption{Comparison of the average structural Hamming distances (SHD) of nonPC and PC-stable algorithms across simulation studies.}
		\label{table1}
		\begin{tabular}{ccc cc cc cc cc}
			\toprule
			&&&\multicolumn{2}{c}{Normal}&\multicolumn{2}{c}{Copula}&\multicolumn{2}{c}{Mixture}&\multicolumn{2}{c}{Nonlinear SEM}\\
			\cmidrule(r){4-5}\cmidrule(r){6-7}\cmidrule(r){8-9}\cmidrule(r){10-11}
			$n$ & $p$ & $\E(m)$\;\; & nonPC & PC-stable\;\;  & nonPC & PC-stable \;\; & nonPC & PC-stable \;\; & nonPC & PC-stable\\
			\hline
			50 & 9 & 1.4 & 3.35 & 3.05 & 5.55 & 5.75 & 3.8 &    3.5  & 2.9 & 3.7 \\
			100 & 27  & 2.0 & 14.55 & 11.00 & 25.6 & 28.6 & 17.75 & 18.00  & 15.05 & 20.05\\
			150 & 81 & 2.4 & 53.70 & 43.45 & 97.3 & 121.3 & 69.05 & 77.75  & 62.583 & 95.083 \\
			200 & 243 & 2.8 & 186.2 & 183.4 & 331.00 & 471.45 & 250.3 & 336.1 & 213.70 & 375.45 \\
			%250 & 729 & 3.2 &  \\
			\toprule
		\end{tabular}
		\\
	\end{table}
	
	The results in Table \ref{table1} demonstrate that the nonPC performs nearly as good as the PC-stable for the Gaussian data example, in terms of the average SHD. But for each of the non-Gaussian data examples the nonPC performs better than the PC-stable in estimating the true skeleton of the underlying DAGs. The improvement in SHD becomes more substantial as the dimension grows. The superior performance of the nonPC over PC-stable for the non-Gaussian graphical models is expected, as the characterization of conditional independence by partial correlations is only valid under the assumption of joint Gaussianity.

	\subsection{Performance of the nonFCI algorithm}\label{sec:num_nonFCI}
	In this subsection, we compare the performances of the nonFCI and the FCI-stable algorithms over various simulated datasets.
	We first generate random DAGs as in Examples~\ref{eg1} and \ref{eg2}. To assess the impact of latent variables, we randomly define half of the variables with no parents and at least one child as latent. We do not consider selection variables. We run both the nonFCI and the FCI-stable algorithms on the above data examples with $n=200$, $p=\{10,20,30,100,200\}$ and $\alpha=0.01$, using 199 bootstrap replicates for the CdCov-based conditional independence tests. We consider 20 simulation runs for each of the data generating models. Table~\ref{table2} reports the average SHD between the estimated and true PAG skeleton by the nonFCI and FCI-stable algorithms.
	
	\begin{table}[t]\footnotesize
		\centering
		\caption{Comparison of the average structural Hamming distances (SHD) of nonFCI and FCI-stable algorithms across simulation studies.}
		\label{table2}
		\begin{tabular}{cc cc cc cc cc }
			\toprule
			&&\multicolumn{2}{c}{Normal}&\multicolumn{2}{c}{Copula}&\multicolumn{2}{c}{Mixture}&\multicolumn{2}{c}{Nonlinear SEM}\\
			\cmidrule(r){3-4}\cmidrule(r){5-6}\cmidrule(r){7-8} \cmidrule(r){9-10}
			$p$ & $\E\,N$ & nonFCI & FCI-stable & nonFCI & FCI-stable & nonFCI & FCI-stable & nonFCI & FCI-stable \\
			\hline
			10 & 2.0 & 7.15 & 7.60 &  1.3 & 1.8 & 5.65 &  6.80 & 7.15 & 8.20 \\
			20  & 2.0 & 14.55 & 17.60 & 4.55 & 6.85 & 13.65 & 18.55 & 19.0 & 20.8 \\
			30 & 2.0 & 27.65 & 33.95 & 5.25 & 10.15 & 19.3 & 27.8 & 33.40 & 37.85 \\
			100 & 3.0 & 109.30 & 150.35 & 26.95 & 60.05 & 62.25 & 111.10 & 115.2 & 149.0 \\
			200 & 3.0 & 287.75 & 371.40 & 76.733 & 157.267 & 136.05 & 255.10 & 289.6 &  354.1 \\
			\toprule
		\end{tabular}
		\\
	\end{table}
	
	\begin{comment}
	\begin{table}[H]\small
	\centering
	\caption{Performance comparison of the nonFCI and FCI-stable algorithms}
	\label{table3}
	\begin{tabular}{cc cccc cccc }
	\toprule
	&&\multicolumn{4}{c}{Mixture}&\multicolumn{4}{c}{Nonlinear SEM}\\
	\cmidrule(r){3-6}\cmidrule(r){7-10}
	& & \multicolumn{2}{c}{nonFCI}  &  \multicolumn{2}{c}{FCI-stable} & \multicolumn{2}{c}{nonFCI} &  \multicolumn{2}{c}{FCI-stable} \\
	\cmidrule(r){3-4}\cmidrule(r){5-6}\cmidrule(r){7-8} \cmidrule(r){9-10}
	$p$ & $\E\,N$ & TP & FP & TP & FP & TP & FP & TP & FP \\
	\hline
	10 & 2.0 & 4.4   &  1.4  &   7.2   &  3.1 & 5.7  &   3.4  &   6.4     & 6.2 \\
	20  & 2.0 &  2.3  &   6.8  &   5.0  &  15.0 & 4.2  &  14.2  &   4.3    & 19.1 \\
	30 & 2.0 & 1.7  &  12.0  &   3.0  &  28.1 & 2.2  &  22.5  &   2.9  &  30.4 \\
	100 & 3.0 & 0.5 & 42.2 & 1.4 & 140.2 & 1.8 & 75.8 & 2.8 & 126.3 \\
	200 & 3.0 & 1.0 & 80.1 & 2.7 & 313.7 & 2.6 & 140.7 & 4 & 276  \\
	\toprule
	\end{tabular}
	\\
	\end{table}
	\end{comment}
	
	The results in Table~\ref{table2} demonstrate that in both the Gaussian and non-Gaussian examples the nonFCI algorithm outperforms the FCI-stable in estimating the true PAG skeleton.

	\subsection{Real data example}\label{sec:num_real}
	
	To demonstrate the flexibility of the proposed framework, we apply the nonPC algorithm to the Montana Economic Outlook Poll dataset. The poll was conducted in May 1992 where a random sample of 209 Montana residents were asked whether their personal financial status was worse, the same or better than a year ago, and whether they thought the state economic outlook was better than the year before. Accompanying demographic information on the respondents' age, income, political orientation, and area of residence in the state were also recorded. We obtained the dataset from the Data and Story Library (DASL), available at \url{https://math.tntech.edu/e-stat/DASL/page4.html}. The study comprises of the following 7 categorical variables: AGE = 1 for under 35, 2 \,for 35-54, 3 \,for 55 and over;\, SEX = 0 \, for male, 1\, for female;\, INC = yearly income: 1 for under \$20K, 2\, for \$20-35K, 3\, for over \$35K;\, POL = 1 for Democrat, 2 \,for Independent, 3 \,for Republican;\, AREA = 1 for Western, 2 \,for Northeastern, 3 \,for Southeastern Montana;\, FIN (= Financial status): 1 for worse, 2\, for same, 3\, for better than a year ago; \,and\, STAT (= State economic outlook): 1 for better, 0\, for not better than a year ago.
	
	After removing the cases with missing values, we are left with $n=163$ samples. Since all the variables are categorical, the Gaussianity assumption is outrightly violated. Thus we would expect the nonPC to perform better than the PC-stable in learning the true causal structure among the variables in this case. Figure~\ref{fig_real} below presents the CPDAGs estimated by the nonPC and PC-stable algorithms at a significance level $\alpha=0.1$ with 199 bootstrap replicates for the CdCov-based conditional independence tests.
	
	It is quite intuitive that age and sex are likely to affect the income; one's financial status and the area of residence might also influence their political inclination; and improvements or downturns in the state economic outlook might impact an individual's financial status. The CPDAG estimated by the nonPC algorithm in Figure~\ref{fig_real}a \,affirms such common-sense understandings of these causal influences. However, in the CPDAG estimated by the PC-stable in Figure~\ref{fig_real}b, the edge between age and income is missing. Also the directed edges POL $\rightarrow$ AREA and POL $\rightarrow$ FIN seem to make little sense in this case.
	
	\begin{comment}
	\begin{enumerate}
	\item AGE = 1 under 35, 2 for 35-54, 3 for 55 and over
	\item SEX = 0 male, 1 female
	\item INC = yearly income : 1 under \$20K, 2 for \$20-35K, 3 over \$35K
	\item POL = 1 Democrat, 2 Independent, 3 Republican
	\item AREA = 1 Western, 2 Northeastern, 3 Southeastern Montana
	\item FIN = Financial status : 1 worse, 2 same, 3 better than a year ago
	\item STAT = State economic outlook : 1 better, 2 not better than a year ago.
	\end{enumerate}
	\end{comment}
	
	\begin{figure}[t]
		\vskip -0.1in
		\centering
		\subfloat[nonPC]{\includegraphics[width=0.45\linewidth]{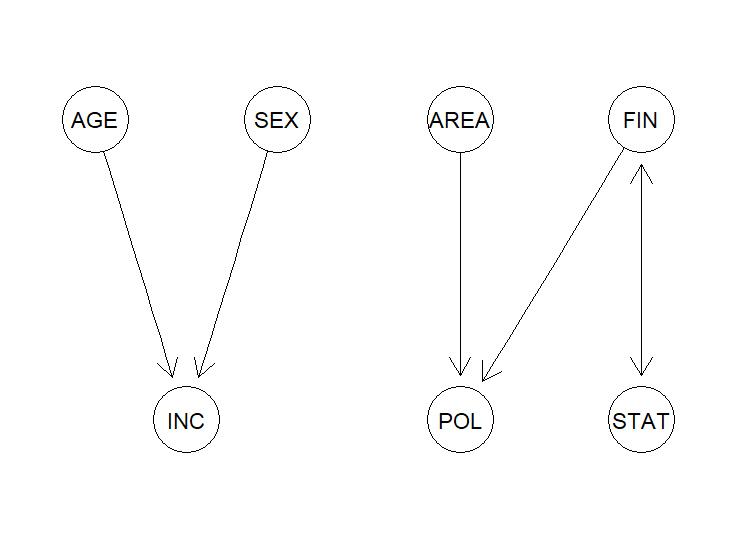}}
		\qquad \qquad
		\subfloat[PC-stable]{\includegraphics[width=0.45\linewidth]{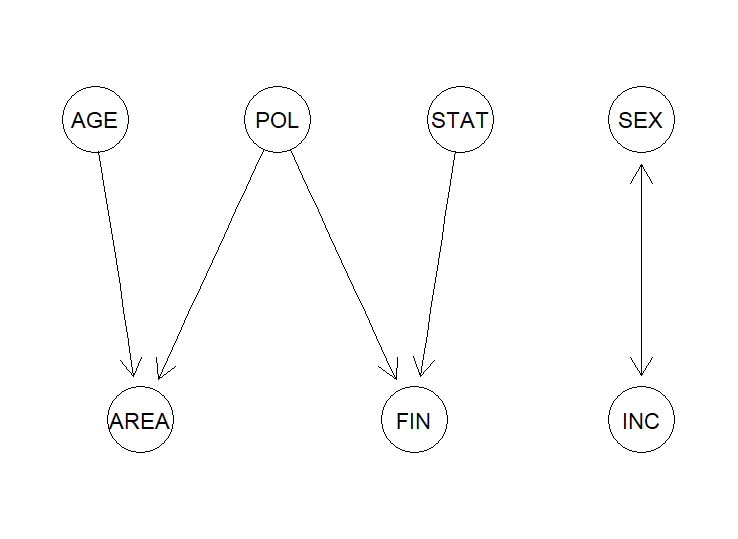}}
		\caption{CPDAGs estimated by the nonPC and PC-stable algorithms for the Montana poll dataset.}\label{fig_real}
		\vskip -0.1in
	\end{figure}

	\section{Proofs of the Theoretical Results}\label{sec:appendix}
	
	In this section, we provide detailed technical proofs of the theoretical results presented in the paper. We first state and prove a concentration inequality that will be used in the proof of Theorem~\ref{th:conc}. 
	
	\begin{lemma}\label{sup_lemma1}
		Consider a U-statistic $U_n = U(X_1, \dots, X_n) = \binom{n}{m}^{-1} \dis \sum_{i_1< \dots <i_m} h(X_{i_1}, \dots, X_{i_m})$ with a symmetric kernel $h$ such that $\E U_n = \E h(X_1, \dots, X_m) = \theta$. Further suppose $\vert h(X_1, \dots, X_m) \vert \leq M$ for some $M>0$. Then for any $\epsilon>0$, we have
		\begin{align*}
			\mathbb{P}\left(\vert U_n - \theta \vert > \epsilon \right)\;\leq\; 2\,\exp\left( -\frac{\epsilon^2 \,k}{2M^2} \right)\,, 
		\end{align*}
		where $k:=\lfloor \frac{n}{m} \rfloor$.
	\end{lemma}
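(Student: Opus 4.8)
The plan is to use Hoeffding's classical decomposition of a $U$-statistic as an average of averages of independent summands, and then run a Chernoff bound. First I would recall Hoeffding's averaging identity: writing $k = \lfloor n/m \rfloor$ and, for a permutation $\pi$ of $\{1,\dots,n\}$, setting
\[
W_\pi := \frac{1}{k} \sum_{j=1}^k h\big(X_{\pi((j-1)m+1)}, \dots, X_{\pi(jm)}\big),
\]
the symmetry of $h$ together with the exchangeability of the sample gives the representation
\[
U_n = \frac{1}{n!} \sum_{\pi} W_\pi .
\]
The point of this identity is that each $W_\pi$ is an average of $k$ \emph{independent} kernel evaluations, since the $k$ index blocks of size $m$ are disjoint and the $X_i$ are i.i.d.; thus the dependence inherent in the $U$-statistic has been pushed entirely into the outer averaging over permutations.

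Next I would bound the moment generating function of $U_n - \theta$. By the representation above and convexity of $t \mapsto e^{st}$ (Jensen's inequality applied to the uniform average over permutations),
\[
\E \exp\big(s(U_n - \theta)\big) \;\le\; \frac{1}{n!}\sum_{\pi} \E\exp\big(s(W_\pi - \theta)\big) \;=\; \E \exp\big(s(W - \theta)\big),
\]
where $W$ denotes a single block average, namely a mean of $k$ i.i.d. terms, each lying in $[-M, M]$ with expectation $\theta$. Applying Hoeffding's lemma to each of these $k$ independent centered summands (each supported on an interval of length $2M$) then yields
\[
\E\exp\big(s(W - \theta)\big) \;\le\; \exp\Big(\tfrac{s^2 M^2}{2k}\Big).
\]

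Then I would finish with a Chernoff argument. For $s > 0$, Markov's inequality combined with the bound above gives
\[
\mathbb{P}(U_n - \theta > \epsilon) \;\le\; \exp\Big(-s\epsilon + \tfrac{s^2 M^2}{2k}\Big),
\]
and optimizing the exponent over $s$ (the minimizer is $s = k\epsilon/M^2$) produces the one-sided bound $\exp\big(-\epsilon^2 k/(2M^2)\big)$. Replacing $h$ by $-h$ yields the identical bound for the lower tail $\mathbb{P}(\theta - U_n > \epsilon)$, and a union bound over the two tails supplies the factor $2$, giving the claimed inequality. The only genuinely nontrivial ingredient is the averaging identity of the first step, together with the observation that it reduces the problem to $k$ independent bounded terms; everything after it is the standard bounded-difference Chernoff computation, so I expect that reduction to be the crux of the argument.
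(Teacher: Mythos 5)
Your proposal is correct and follows essentially the same route as the paper's proof: Hoeffding's averaging representation $U_n = \frac{1}{n!}\sum_\pi W_\pi$ with block averages of $k$ independent bounded kernel evaluations, followed by Jensen's inequality on the moment generating function, Hoeffding's lemma, and a Chernoff bound optimized at $s = k\epsilon/M^2$. No substantive differences.
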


	\begin{proof}[Proof of Lemma \ref{sup_lemma1}]
		Define $$W(X_1, \dots, X_n) \;:=\; \frac{1}{k}\,\big[ h(X_1,\dots, X_m)\,+\, h(X_{m+1},\dots, X_{2m}) \,+\, \cdots \,+\, h(X_{km-m+1}, \dots, X_{km}) \, \big]\,.$$ Then, following Section 5.1.6 in Serfling (1980), we can write 
		\begin{equation}\label{supp1}
			U_n \;=\; \frac{1}{n!}\dis \sum_{\pi} W(X_{i_1}, \dots, X_{i_n})\,,
		\end{equation}
		where $\dis\sum_{\pi}$ denotes summation over all \,$n!$\, permutations $(i_1, \dots, i_n)$ of $(1,2,\dots, n)$. Thus $U_n$ can be expressed as an average of $n!$ terms, each of which is an average of $k$ i.i.d. random variables. Using Markov's inequality, convexity of the exponential function and Jensen's inequality, we have, for any $t>0$,
		\begin{align}\label{supp2}
			\begin{split}
				\mathbb{P}\left(U_n - \theta > \epsilon \right)\;&=\; \mathbb{P}\big(\exp \big( t\,(U_n - \theta)\big) > \exp \left(t\epsilon\right) \big)\\  
				&\leq\;\exp(-t\epsilon) \exp(-t\theta) \,\E\,\left[\exp \big(t\,U_n\big)\right]\\
				&=\; \exp(-t\epsilon) \exp(-t\theta) \,\E \left[\exp \left(t\,\frac{1}{n!}\dis \sum_p W(X_{i_1}, \dots, X_{i_n})\right)\right]\\
				&\leq \; \exp(-t\epsilon) \exp(-t\theta) \, \frac{1}{n!}\dis \sum_{\pi} \E \left[\exp \big( t\, W(X_{i_1}, \dots, X_{i_n})\big) \right]\\
				&= \; \exp(-t\epsilon) \exp(-t\theta) \, \left[\E\left( \exp \left(\frac{t}{k}\, h \right)\right)\right]^k\\
				&= \; \exp(-t\epsilon)\, \E^k \left[ \exp \Big(\frac{t}{k} \left( h - \theta\right)\Big)\right]\,,
			\end{split}
		\end{align}
		where, for notational simplicity, we use $h$ to denote $h(X_1, \dots, X_m)$. Using Hoeffding's Lemma, we have from (\ref{supp2})
		\begin{align*}
			\mathbb{P}\left(U_n - \theta > \epsilon \right)\;\leq\; \exp\left( -t\epsilon \,+\, k\, \frac{1}{8}\,\frac{t^2}{k^2}\,(2M)^2 \right)\;=\;\exp\left( -t\epsilon \,+\, \frac{t^2 M^2}{2k} \right)\,. 
		\end{align*}
		Symmetrically, we obtain
		\begin{equation}\label{supp3}
			\mathbb{P}\left(\vert U_n - \theta \vert > \epsilon \right)\;\leq\; 2\,\exp\left( -t\epsilon \,+\, \frac{t^2 M^2}{2k} \right)\,. 
		\end{equation}
		The right hand side of (\ref{supp3}) is minimized at $t=\epsilon \,k/M^2$. Therefore, choosing $t=\epsilon \,k/M^2$, we get 
		\begin{align*}
			\mathbb{P}\left(\vert U_n - \theta \vert > \epsilon \right)\;\leq\; 2\,\exp\left( -\frac{\epsilon^2 \,k}{2M^2} \right)\,. 
		\end{align*}
		
	\end{proof}

	\begin{proof}[Proof of Theorem \ref{th:conc}]
		
		When $|S|=0$, it can be shown in similar lines of Theorem 1 in Li et al.\,(2012) that for any\, $\epsilon > 0$, there exist positive constants $A$, $B$ and\, $\gamma \in (0,1/4)$\, such that 
		\vspace{-0.1in}
		\begin{align*}
			\mathbb{P}\left(\vert\, \widehat{\rho}^{\;*}(X_a, X_b | X_S) - \rho_0^*\,(X_a, X_b | X_S) \,\vert > \epsilon \right) \;\leq \;  O\Big( 2\,\exp\left( - A\,n^{1-2\gamma}\,\epsilon^2  \right) \,+\, n\, \exp\big( - B \, n^{\gamma}\big)\Big)\,.
		\end{align*}
		Now consider the case $0<|S|\leq m_p$. For notational convenience, we treat $X_a, X_b$ and $X_S$ as $X$, $Y$ and $Z$, respectively. 
		
		Denote $\delta_Z := \CdCov^2(X, Y \vert Z)$. Then, $\rho_{0}^* = \E[\delta_Z]$. Recall that 
		\vspace{-0.15in}
		\begin{align}\label{newnum1}
			\begin{split}
				&\widehat{\rho}^{\;*}(X, Y | Z) \;:=\; \frac{1}{n} \dis \sum_{u=1}^n \CdCov^2_n(X,Y|Z_u)\, \;:=\; \frac{1}{n} \dis \sum_{u=1}^n \Delta_{i,j,k,l;u}\,,\\
				\textrm{where}\qquad &\Delta_{i,j,k,l;u} \;:=\; \dis \sum_{i,j,k,l} \frac{K_{iu}\, K_{ju}\, K_{ku}\, K_{lu}}{12\,\big(\sum_{i=1}^n K_{iu}\big)^4}\; d^S_{ijkl}\,.
			\end{split}
		\end{align}
		From (\ref{newnum1}), we have
		\begin{align}\label{supp13.5}
			\begin{split}
				\E\,\big[\CdCov^2_n(X,Y|Z_u) \vert Z \big] \;&=\; \frac{1}{12} \,\E\,\big[\,d^S_{1234} \,|\, Z_1=Z_u, \dots, Z_4=Z_u \big] \, \dis \sum_{i,j,k,l} K_{iu}\,K_{ju}\,K_{ku}\,K_{lu}\,/\,\left( \sum_{i=1}^n K_{iu}\right)^4 \\
				&=\; \frac{1}{12} \,\E\,\big[\,d^S_{1234} \,|\, Z_1=Z_u, \dots, Z_4=Z_u \big] \;=\; \delta_{Z_u}\,,
			\end{split}
		\end{align}
		%\vspace{-0.4in}
		where the last equality  follows from Lemma 1 in Wang et al.\,(2015). Together, (\ref{newnum1}) and (\ref{supp13.5}) imply $\E\,[\,\widehat{\rho}^{\;*}] \;=\; \rho_{0}^*$. 
		
		Now consider the truncation
		\begin{align}\label{newnum3}
			\begin{split}
				%\frac{1}{12}\, d^S_{i,j,k,l} \;&=\; \frac{1}{12}\, d^{S1}_{i,j,k,l} \;+\; \frac{1}{12}\, d^{S2}_{i,j,k,l}\\
				%&:=\; \frac{1}{12}\, d^S_{i,j,k,l}\, {\bf 1}\left(\left\vert \frac{1}{12}\, d^S_{i,j,k,l} \right\vert \leq M \right) \;+\; \frac{1}{12}\, d^S_{i,j,k,l} \,{\bf 1}\left(\left\vert \frac{1}{12}\, d^S_{i,j,k,l} \right\vert > M \right) \\
				\rho_0^* \;&=\; \rho_{01}^* \,+\, \rho_{02}^* \\
				&:= \; \E\,\left[  \frac{1}{12}\, d^S_{i,j,k,l}\, {\bf 1}\left(\left\vert \frac{1}{12}\, d^S_{i,j,k,l} \right\vert \leq M \right)  \right] \;+\; \E\,\left[  \frac{1}{12}\, d^S_{i,j,k,l} \,{\bf 1}\left(\left\vert \frac{1}{12}\, d^S_{i,j,k,l} \right\vert > M\right)  \right] \,,
			\end{split}
		\end{align}
		where $M>0$ will be specified later. Then, using triangle inequality,
		\begin{align}\label{newnum4}
			\begin{split}
				\mathbb{P}\left(\vert \widehat{\rho}^{\;*} - \rho_{0}^* \vert > \epsilon \right) \;&=\; \mathbb{P}\left(\left\vert \frac{1}{n} \dis \sum_{u=1}^n \big( \sum_{i,j,k,l} \Delta_{i,j,k,l; u} - \rho_{0}^* \big) \right\vert  > \epsilon \right)\\
				& \leq \; \mathbb{P}\left(\left\vert \frac{1}{n} \dis \sum_{u=1}^n \left( \sum_{i,j,k,l} \Delta_{i,j,k,l; u} \, {\bf 1}\left(\left\vert \frac{1}{12}\, d^S_{i,j,k,l} \right\vert \leq M\right) - \rho_{01}^* \right) \right\vert  > \epsilon/2 \right) \\ 
				& \;\; +\, \mathbb{P}\left(\left\vert \frac{1}{n} \dis \sum_{u=1}^n \left( \sum_{i,j,k,l} \Delta_{i,j,k,l; u} \, {\bf 1}\left(\left\vert \frac{1}{12}\, d^S_{i,j,k,l} \right\vert > M\right) - \rho_{02}^* \right) \right\vert  > \epsilon/2 \right)\\
				& =: \mathrm{I} \,+\, \mathrm{II}\,.
			\end{split}
		\end{align}
		Clearly from (\ref{newnum1}) we have $\vert \Delta_{i,j,k,l; u} \vert \leq M$ when $\left\vert \frac{1}{12}\, d^S_{i,j,k,l} \right\vert \leq M$. With this observation, we have
		\begin{align}\label{newnum5}
			\mathrm{I} \;& \leq \; 2\,\exp\left( -\frac{n\,\epsilon^2 }{8\,M^2} \right)\,,
		\end{align}
		which follows from Lemma \ref{sup_lemma1} by setting $m=1$, $k=\lfloor n \rfloor$ and $\epsilon=\epsilon/2$. Choosing $M=c\,n^{\gamma}$ for $\gamma \in (0,1/4)$ and some positive constant $c$, it follows from (\ref{newnum5}) that
		\begin{align}\label{newnum6}
			\mathrm{I} \;& \leq \; 2\,\exp\left( - C_1\,n^{1-2\gamma}\,\epsilon^2  \right)\,,
		\end{align}
		for some $C_1>0$.
		
		Now to find a suitable upper bound for II, note that a simple application of triangle inequality yields 
		\begin{align}\label{supp8}
			\begin{split}
				\frac{\epsilon}{2} \; &< \; \left\vert\, \frac{1}{n} \dis \sum_{u=1}^n \sum_{i,j,k,l} \Delta_{i,j,k,l; u} \, {\bf 1}\left(\left\vert \frac{1}{12}\, d^S_{i,j,k,l} \right\vert > M\right) - \rho_{02}^* \,\right\vert \\
				&\leq \; \left\vert\, \frac{1}{n} \dis \sum_{u=1}^n \sum_{i,j,k,l} \Delta_{i,j,k,l; u} \, {\bf 1}\left(\left\vert \frac{1}{12}\, d^S_{i,j,k,l} \right\vert > M\right) \,\right\vert \,+\, \vert \rho_{02}^* \vert\,.
			\end{split}
		\end{align}
		For the choice of $M=c\,n^{\gamma}$, we have 
		\begin{align}\label{supp9}
			\rho_{02}^* \;=\; \E\,\left[  \frac{1}{12}\, d^S_{i,j,k,l} \,{\bf 1}\left(\left\vert \frac{1}{12}\, d^S_{i,j,k,l} \right\vert > M\right)  \right] \;<\; \frac{\epsilon}{4}
		\end{align}
		for sufficiently large $n$ (see, for example, Exercise 6 in Chapter 5, Resnick\,(1999)). Combining (\ref{supp8}) and (\ref{supp9}), we get
		\begin{align*}
			&\left\{ \left\vert\, \frac{1}{n} \dis \sum_{u=1}^n \sum_{i,j,k,l} \Delta_{i,j,k,l; u} \, {\bf 1}\left(\left\vert \frac{1}{12}\, d^S_{i,j,k,l} \right\vert > M\right) - \rho_{02}^* \,\right\vert > \epsilon/2 \right\} \\
			&\subseteq \; \left\{ \left\vert\, \frac{1}{n} \dis \sum_{u=1}^n \sum_{i,j,k,l} \Delta_{i,j,k,l; u} \, {\bf 1}\left(\left\vert \frac{1}{12}\, d^S_{i,j,k,l} \right\vert > M\right) \,\right\vert > \epsilon/4 \right\}\\
			& \subseteq \; \left\{ \left[\, \left\vert \frac{1}{12}\, d^S_{i,j,k,l} \right\vert > M \right] \;\textrm{for some}\,\, 1\leq i, j, k, l \leq n \right\},
		\end{align*}
		which implies 
		\begin{align}\label{supp10}
			\begin{split}
				& \mathbb{P}\left( \left\vert\, \frac{1}{n} \dis \sum_{u=1}^n \sum_{i,j,k,l} \Delta_{i,j,k,l; u} \, {\bf 1}\left(\left\vert \frac{1}{12}\, d^S_{i,j,k,l} \right\vert > M\right) - \rho_{02}^* \,\right\vert > \epsilon/2 \right) \\
				&\leq \; \mathbb{P}\left( \left\vert\, \frac{1}{n} \dis \sum_{u=1}^n \sum_{i,j,k,l} \Delta_{i,j,k,l; u} \, {\bf 1}\left(\left\vert \frac{1}{12}\, d^S_{i,j,k,l} \right\vert > M\right) \,\right\vert > \epsilon/4 \right)\\
				& \leq \; n^4\; \mathbb{P}\left(\left\vert \frac{1}{12}\, d^S_{i,j,k,l} \right\vert > M \right)\,.
			\end{split}
		\end{align}
		This is because if $\left\vert \frac{1}{12}\, d^S_{i,j,k,l} \right\vert \leq M$ for all $1\leq i, j, k, l \leq n$, then
		\[
		n^{-1} \sum_{u=1}^n \sum_{i,j,k,l} \Delta_{i,j,k,l; u} \, {\bf 1}\left(\left\vert \frac{1}{12}\, d^S_{i,j,k,l} \right\vert > M\right) \\= 0.
		\]
		Under Condition (A1), Lemma 2 in the supplementary materials of Wen et al.\,(2018) proves that there exists $s>0$ for which $\E \left[\exp\big( s \,\big\vert \,d^S_{1234}  \big\vert\big)\right]$ is finite. Using Markov's inequality, we have
		\begin{align}\label{supp12}
			\begin{split}
				\mathbb{P}\left(\left\vert \frac{1}{12}\, d^S_{i,j,k,l} \right\vert > M \right) \;&\leq \; \mathbb{P}\left(\,\exp \left(s\,\left\vert \frac{1}{12}\, d^S_{i,j,k,l} \right\vert \right) >\, \exp(sM) \right)\\
				&\leq \; \exp(-sM)\, \E\,\left[ \exp \left(s\,\left\vert \frac{1}{12}\, d^S_{i,j,k,l} \right\vert \right) \right]\\
				& \leq \; C_2\,\exp(-sM) \;\leq \; C_2\,\exp(-s_1 \,n^{\gamma})\,,
			\end{split}
		\end{align}
		for some positive constants $C_2$ and $s_1$, where last line uses the fact that $M=c\,n^{\gamma}$. Combining (\ref{supp10}) and (\ref{supp12}), we have
		\begin{align}\label{newnum7}
			\mathrm{II} \;\leq \; C_2\,n^4\,\exp(-s_1 \,n^{\gamma})\,.
		\end{align}
		Finally, combining (\ref{newnum4}), (\ref{newnum6}) and (\ref{newnum7}), we get 
		\begin{align*}
			\mathbb{P}\left(\vert \widehat{\rho}^{\;*} - \rho_0^* \vert > \epsilon/2 \right) \;\leq \; 2\,\exp\left(-C_1 n^{1-2\gamma} \epsilon^2\right) \,+\, C_2\,\, n^4 \,\exp\left(- s_1 n^{\gamma}  \right)\,,
		\end{align*}
		for some positive constants \,$\gamma, C_1, C_2$ and $s_1$. This completes the proof of the theorem.

	\end{proof}

	\begin{proof}[Proof of Theorem \ref{th:conc_sup}]
		The first inequality in Theorem \ref{th:conc_sup} simply follows by observing the fact that for any generic random sequence $\{X_n\}_{n=1}^{\infty}$ and any $\epsilon > 0$, 
		\begin{align*}
			P(\vert X_n \vert > \epsilon ) \; \leq \; P\big(\sup_n \,\vert X_n \vert > \epsilon \big)
		\end{align*}
		for all $n\geq 1$, which in turn implies 
		\begin{align*}
			\sup_n \,P(\vert X_n \vert > \epsilon ) \; \leq \; P\big(\sup_n \vert X_n \vert > \epsilon \big)\,.
		\end{align*}
		The second inequality follows from union bound and Theorem \ref{th:conc}.
	\end{proof}
	
	\begin{proof}[Proof of Theorem \ref{th:consistency}]
		Denote by $E_{a b|S}$ the event that ``an error occurs while testing for $X_a \bigCI X_b \,|\, X_S$" for $a, b \in  V$ and $S \in J_{a,b}^{m_{p_n}}$. Then 
		\begin{align}\label{supp100}
			%\begin{split}
			\mathbb{P}(\,\textrm{an error occurs in the nonPC algorithm}\,) \; \leq \; \mathbb{P}\Bigg( \bigcup_{\substack{a, b \,\in \, V \\ S \in J_{a,b}^{m_{p_n}}}} E_{a b|S} \Bigg) \;\lesssim \; p_n^{m_{p_n} + 2}\, \mathbb{P}(E_{a b|S})\,,
			%\end{split}
		\end{align}
		which is essentially due to the union bound. Now, we can write $E_{ab|S} = E_{ab|S}^{\,\mathrm{I}} \cup E_{ab|S}^{\,\mathrm{II}}$, where
		\begin{align*}
			&\textrm{(Type I error)} \qquad E_{ab|S}^{\,\mathrm{I}} \,:\, \vert \hat{\rho}^*_{ab|S} \vert > \xi_n \qquad \textrm{when} \;\; \rho_{0\,;\, ab|S}^* = 0 \\
			\textrm{and} \qquad &\textrm{(Type II error)} \qquad E_{ab|S}^{\,\mathrm{II}} \,:\, \vert \hat{\rho}^*_{ab|S} \vert \leq \xi_n \qquad \textrm{when} \;\; \rho_{0\,;\, ab|S}^* > 0\,.
		\end{align*}
		Then by the using triangle inequality
		\begin{align}\label{supp101}
			\begin{split}
				\mathbb{P}(E_{ab|S}^{\,\mathrm{I}}) \;&= \; \mathbb{P}(\vert\, \hat{\rho}^*_{ab|S} \,\vert > \xi_n) \; = \; \mathbb{P}\big(\vert\, \hat{\rho}^*_{ab|S} - \rho_{0\,;\, ab|S}^* + \rho_{0\,;\, ab|S}^* \,\vert > \xi_n \big)\\
				& \leq \; \mathbb{P}\big(\vert\, \hat{\rho}^*_{ab|S} - \rho_{0\,;\, ab|S}^* \,\vert > \xi_n - C_{max} \big)\\
				& \lesssim \; 2\,\exp\big(-A\, n^{1-2\gamma} (\xi_n - C_{max})^2\big) \,+\,  n^4\, \exp\big(- B n^{\gamma}  \big)\,
			\end{split}
		\end{align}
		for positive constants $A, B$ and $\gamma \in (0,1/4)$,\, where the last inequality follows from Theorem \ref{th:conc_sup}. Similarly, using the definition of $C_{min}$ and the identity $|a| - |b| \leq \vert a-b \vert$ for $a,b \in \mathbb{R}$, we have
		\begin{align}\label{supp102}
			\begin{split}
				\mathbb{P}\left(E_{ab|S}^{\,\mathrm{II}}\right) \;&= \; \mathbb{P}(\vert\, \hat{\rho}^*_{ab|S} \,\vert \leq \xi_n) \; =  \; \mathbb{P}(- \vert\, \hat{\rho}^*_{ab|S} \,\vert \geq - \xi_n) \\
				&= \; \mathbb{P}(\vert \rho_{0\,;\, ab|S}^* \vert - \vert\, \hat{\rho}^*_{ab|S} \vert \geq \vert \rho_{0\,;\, ab|S}^* \vert - \xi_n)\\
				& \leq \; \mathbb{P}(\vert \rho_{0\,;\, ab|S}^* - \hat{\rho}^*_{ab|S} \vert \geq C_{min} - \xi_n)\\
				& \lesssim \; 2\,\exp\big(-A\, n^{1-2\gamma} (\xi_n - C_{min})^2\big) \,+\,  n^4\, \exp\big(- B n^{\gamma}  \big)\,.
			\end{split}
		\end{align}
		Again the last inequality follows from Theorem \ref{th:conc_sup}. Combining equations (\ref{supp100}), (\ref{supp101}) and (\ref{supp102}), we have
		\begin{align*}
			& \; \mathbb{P}\,(\,\textrm{an error occurs in the nonPC algorithm}\,)\\
			&= \; O\Big(\,p_n^{m_{p_n} +2} \big[\, 2\,\exp\big(-A\, n^{1-2\gamma} (\xi_n - C_{max})^2\big) \,+\, 2\,\exp\big(-A\, n^{1-2\gamma} (\xi_n - C_{min})^2\big] \\
			& \qquad \qquad +\,  n^4 \exp\big(- B\, n^{\gamma}  \big)  \big] \Big)\\
			&= \; o(1)\,,
		\end{align*}
		where the last step follows from the fact that $\gamma \in (0,1/4)$ and Assumption (A5). This implies that as $n \to \infty$,
		\begin{align*}
			\mathbb{P}\left( \hat{G}_{\textrm{skel},n} = G_{\textrm{skel},n} \right) \;&= \; 1 \,-\, \mathbb{P}\,(\,\textrm{an error occurs in the nonPC algorithm}\,) \\
			& \to \; 1\,.
		\end{align*}
		
	\end{proof}

	\begin{proof}[Proof of Theorem \ref{th:consistency_nonFCI}]
		The proof follows similar lines of the proof of Theorem 4.2 in Colombo et al.\,(2012), replacing Lemma 1.4 in their supplement by Theorem \ref{th:conc_sup} in our paper.
		
	\end{proof}

	\section{Discussion}\label{sec:discussion}
	We proposed nonparametric variants of the widely popular PC-stable and FCI-stable algorithms, which employ conditional distance covariance (CdCov) to test for conditional independence relationships in their sample versions. Our proposed algorithms broaden the applicability of the PC/PC-stable and FCI/FCI-stable algorithms to general distributions over DAGs, and enable taking into account non-linear and non-monotone conditional dependence among the random variables, which partial correlations fail to capture. We show that the high-dimensional consistency of the PC-stable and FCI-stable algorithms carry over to general distributions over DAGs when we implement CdCov-based nonparametric tests for conditional independence. Our consistency results only require mild moment and tail conditions on the set of variables, without requiring any strict distributional assumptions. 
	
	% \as{I added somethign about tuning par below, please check and complete}
	There are several intriguing potential directions for future research. First, it is generally difficult to select the tuning parameter (i.e., the significance threshold for the CdCov test) in causal structure learning. One possible strategy is to use ideas based on \textit{stability selection} (Meinshausen and B\"uhlmann, 2010; Shah and Samworth, 2013). By assessing the stability of the estimated graphs in multiple subsamples, this strategy allows us to choose the tuning parameter in order to control the false positive error. However, the repeated subsampling increases the computational burden. 
	Second, the computational and sample complexities of the PC and FCI algorithms (and hence those of the nonPC and nonFCI) scale with the maximum degree of the DAG, which is assumed to be small relative to the sample size. However, in many applications one encounters sparse graphs containing a small number of highly connected `hub' nodes. In such cases, Sondhi and Shojaie\,(2019) proposed a low-complexity variant of the PC algorithm, namely the {\it reduced PC} (rPC) algorithm, that exploits the local separation property of large random networks (Anandkumar et al., 2012). The rPC is shown to consistently estimate the skeleton of a high-dimensional DAG by conditioning only on sets of small cardinality. In this light, it would be intriguing to develop computationally faster variants of the nonPC and nonFCI in future by exploiting the idea of local separation.

	\vspace{0.2in}

	\appendix
	
	\section{Preliminaries}\label{sec:preliminaries}
	
	%The supplementary material is organized as follows. In Section \ref{sec:preliminaries}, we present the pseudocodes of the oracle versions of the PC-stable and FCI-stable algorithms. In Section \ref{sec:appendix} we provide detailed technical proofs of the theoretical results presented in the main paper.
	
	%\section{Preliminaries}\label{sec:preliminaries}
	
	For the sake of completeness, we illustrate in this section the pseudocodes of the oracle versions of the PC-stable and FCI-stable algorithms.
	
	Algorithms \ref{PC-stable1} presents the pseudocode of the oracle version of Step 1 of the PC-stable algorithm (Algorithm 4.1 of Colombo and Maathuis, 2014), which estimates the skeleton of the underlying DAG. Algorithm \ref{PC-stable2} presents the pseudocode of Step 2 of the PC-stable algorithm (Algorithm 2 of Kalisch and B\"uhlmann, 2007), that extends the skeleton estimated in Step 1 to the CPDAG.  
	Algorithm \ref{FCI-stable} presents the pseudocode of the FCI-stable algorithm (Section 4.4 in Colombo and Maathuis, 2014). It implements Algorithm \ref{FCI-stable1} to obtain an initial skeleton of the underlying PAG, Algorithm \ref{FCI-stable2} to orient the v-structures, and finally Algorithm \ref{FCI-stable3} to obtain the final skeleton that the FCI-stable returns.
	
	%\as{somehow the spacing of alg 1-2 are different -- need to make them consistent}
	
	\begin{algorithm}[!ht]
		\caption{Step 1 of the PC-stable algorithm (oracle version)} \label{PC-stable1}
		
		\begin{algorithmic}%[1]
			\State \textbf{Require} : Conditional independence information among all variables in $V$, and an ordering order$(V)$ on the variables.
			\State Form the complete undirected graph $\cal{C}$ on the vertex set $V$.
			\State Let $l=-1$;
			\Repeat
			\State $l=l+1$;
			\For{all vertices $X_a$ in $\cal{C}$}
			\State let $u(X_a) = adj(\cal{C}, X_a)$
			\EndFor
			\Repeat
			\State Select a (new) ordered pair of vertices $(X_a,X_b)$ that are adjacent in $\cal{C}$ such that \\ \qquad \;\;\; $|u(X_a) \setminus \{X_b\}| \geq l$, using order $(V)$;
			\Repeat
			\State Choose a (new) set $S \subseteq u(X_a) \setminus \{X_b\}$ with $|S|=l$, using order$(V)$;
			\If{$X_a \bigCI X_b \,|\, S$\,}
			\State Delete the edge $X_a - X_b$ from $\cal{C}$;
			\State Let {\it sepset\,}$(X_a,X_b)=\,${\it sepset\,}$(X_b,X_a)=S$;
			\EndIf
			\Until{$X_a$ and $X_b$ are no longer adjacent in $\cal{C}$ or all $S \subseteq u(X_a) \setminus \{X_b\}$ with $|S|=l$ have\\ \qquad \;\;\; been considered}
			\Until{all ordered pairs of adjacent vertices $(X_a,X_b)$ in $\cal{C}$ with $|u(X_a) \setminus \{X_b\}| \geq l$ have been\\ \;\;\;\; considered}
			\Until{all pairs of adjacent vertices $(X_a,X_b)$ in $\cal{C}$ satisfy $|u(X_a) \setminus \{X_b\}| \leq l$}
			\State \textbf{Output} : The estimated skeleton $\cal{C}$, separation sets {\it sepset}.
		\end{algorithmic}
	\end{algorithm}

	\begin{algorithm}[!ht]
		\caption{Step 2 of the PC-stable algorithm} \label{PC-stable2}
		
		\begin{algorithmic}%[1]
			\State \textbf{Require} : Skeleton $\cal{C}$, separation sets {\it sepset}.
			\For{all all pair of nonadjacent vertices $X_a, X_c$ with common neighbor $X_b$ in $\cal{C}$}
			\If{$X_b \notin sepset(X_a, X_c)$\,}
			\State Replace $X_a - X_b - X_c$ in $\cal{C}$ by $X_a \rightarrow X_b \leftarrow X_c$;
			\EndIf
			\EndFor
			\State In the resulting PDAG, try to orient as many undirected edges as possible by repeated applications of the following rules :
			\State {\bf (R1)} Orient $X_b - X_c$ into $X_b \rightarrow X_c$ whenever there is an arrow $X_a \rightarrow X_b$ such that $X_a$ and $X_c$ are nonadjacent (otherwise a new v-structure is created).
			\State {\bf (R2)} Orient $X_a - X_c$ into $X_a \rightarrow X_c$ whenever there is a chain $X_a \rightarrow X_b \rightarrow X_c$ (otherwise a directed cycle is created).
			\State {\bf (R3)} Orient $X_a - X_c$ into $X_a \rightarrow X_c$ whenever there are two chains $X_a - X_b \rightarrow X_c$ and $X_a - X_d \rightarrow X_c$ such that $X_b$ and $X_d$ are nonadjacent (otherwise a new v-structure or a directed cycle is created).
		\end{algorithmic}
	\end{algorithm}
	
	% Algorithm \ref{FCI-stable} presents the pseudocode of the FCI-stable algorithm (Section 4.4 in Colombo and Maathuis, 2014).
	
	\begin{algorithm}[!ht]
		\caption{The FCI-stable algorithm (oracle version)} \label{FCI-stable}
		
		\begin{algorithmic}%[1]
			\State \textbf{Require} : Conditional independence information among all variables in $V_X$ given $V_T$.
			\State Use Algorithm \ref{FCI-stable1} to find an initial skeleton ($\cal{C}$), separation sets (sepset) and unshielded triple list ($\cal{M}$);
			\State Use Algorithm \ref{FCI-stable2} to orient v-structures (update $\cal{C}$);
			\State Use Algorithm \ref{FCI-stable3} to find the final skeleton (update $\cal{C}$ and sepset);
			\State Use Algorithm \ref{FCI-stable2} to orient v-structures (update $\cal{C}$);
			\State Use rules (R1)-(R10) of Zhang\,(2008) to orient as many edge marks as possible (update $\cal{C}$);
			\State \textbf{Output} : $\cal{C}$, sepset.
		\end{algorithmic}
	\end{algorithm}

	\begin{algorithm}[!ht]
		\caption{Obtaining an initial skeleton in the FCI-stable algorithm (Algorithm 4.1 in the supplement of Colombo et al., 2012)} \label{FCI-stable1}
		
		\begin{algorithmic}%[1]
			\State \textbf{Require} : Conditional independence information among all variables in $V_X$ given $V_T$, and an ordering order$(V_X)$ on the variables.
			\State Form the complete undirected graph $\cal{C}$ on the vertex set $V_X$ with edges $\circ\hspace{-0.05in}-\hspace{-0.05in}\circ$.
			\State Let $l=-1$;
			\Repeat
			\State $l=l+1$;
			\For{all vertices $X_a$ in $\cal{C}$}
			\State let $u(X_a) = adj(\cal{C}, X_a)$
			\EndFor
			\Repeat
			\State Select a (new) ordered pair of vertices $(X_a,X_b)$ that are adjacent in $\cal{C}$ such that \\ \qquad \;\;\; $|u(X_a) \setminus \{X_b\}| \geq l$, using order $(V_X)$;
			\Repeat
			\State Choose a (new) set $Y \subseteq u(X_a) \setminus \{X_b\}$ with $|Y|=l$, using order$(V_X)$;
			\If{$X_a \bigCI X_b \,|\, Y \cup V_T$\,}
			\State Delete the edge $X_a \circ\hspace{-0.05in}-\hspace{-0.05in}\circ X_b$ from $\cal{C}$;
			\State Let sepset$(X_a,X_b)=\,$sepset$(X_b,X_a)=Y$;
			\EndIf
			\Until{$X_a$ and $X_b$ are no longer adjacent in $\cal{C}$ or all $Y \subseteq u(X_a) \setminus \{X_b\}$ with $|Y|=l$ have\\ \qquad \;\; been considered}
			\Until{all ordered pairs of adjacent vertices $(X_a,X_b)$ in $\cal{C}$ with $|u(X_a) \setminus \{X_b\}| \geq l$ have been\\ \;\;\;\; considered}
			\Until{all pairs of adjacent vertices $(X_a,X_b)$ in $\cal{C}$ satisfy $|u(X_a) \setminus \{X_b\}| \leq l$}
			\State Form a list $\cal{M}$ of all unshielded triples $\langle X_c\,,\cdot\,,X_d \rangle$ (i.e., the middle vertex is left unspecified) in $\cal{C}$ with $c<d$.
			\State \textbf{Output} : $\cal{C}$, sepset, $\cal{M}$.
		\end{algorithmic}
	\end{algorithm}
	
	\begin{algorithm}[!ht]
		\caption{Orienting v-structures in the FCI-stable algorithm (Algorithm 4.2 in the supplement of Colombo et al., 2012)} \label{FCI-stable2}
		
		\begin{algorithmic}%[1]
			\State \textbf{Require} : Initial skeleton ($\cal{C}$), separation sets (sepset) and unshielded triple list ($\cal{M}$).
			\For{all elements $\langle X_a, X_b, X_c \rangle$ of $\cal{M}$}
			\If{$X_b \notin \textrm{sepset}(X_a,X_c)$}
			Orient $X_a \star\hspace{-0.05in}-\hspace{-0.06in}\circ X_b \circ\hspace{-0.06in}-\hspace{-0.05in}\star X_c$ as $X_a \star\hspace{-0.05in}\rightarrow X_b \leftarrow\hspace{-0.05in}\star X_c$
			\EndIf
			\EndFor
			\State \textbf{Output} : $\cal{C}$, sepset.
		\end{algorithmic}
	\end{algorithm}
	
	%\vspace*{-10in}
	
	\begin{algorithm}[!ht]
		\caption{Obtaining the final skeleton in the FCI-stable algorithm (Algorithm 4.3 in the supplement of Colombo et al., 2012)} \label{FCI-stable3}
		
		\begin{algorithmic}%[1]
			\State \textbf{Require} : Partially oriented graph ($\cal{C}$) and separation sets (sepset).
			\For{all vertices $X_a$ in $\cal{C}$}
			\State let $v(X_a) = \mathrm{pds}(\cal{C}, X_a, \cdot)$;
			\For{all vertices $X_b \in adj(\cal{C},X_a)$}
			\State Let $l=-1$;
			\Repeat
			\State $l=l+1$;
			\Repeat
			\State Choose a (new) set $Y \subseteq v(X_a) \setminus \{X_b\}$ with $|Y|=l$;
			\If{$X_a \bigCI X_b \,|\, Y \cup V_T$\,}
			\State Delete the edge $X_a \star\hspace{-0.06in}-\hspace{-0.06in}\star X_b$ from $\cal{C}$;
			\State Let sepset$(X_a,X_b)=\,$sepset$(X_b,X_a)=Y$;
			\EndIf
			\Until{$X_a$ and $X_b$ are no longer adjacent in $\cal{C}$ or all $Y \subseteq v(X_a) \setminus \{X_b\}$ with $|Y|=l$ have\\ \qquad \qquad \; been considered}
			\Until{$X_a$ and $X_b$ are no longer adjacent in $\cal{C}$ or $|v(X_a) \setminus \{X_b\}| < l$}
			\EndFor
			\EndFor
			\State Reorient all edges in $\cal{C}$ as $\circ\hspace{-0.05in}-\hspace{-0.05in}\circ$.
			\State Form a list $\cal{M}$ of all unshielded triples $\langle X_c\,,\cdot\,,X_d \rangle$ in $\cal{C}$ with $c<d$.
			\State \textbf{Output} : $\cal{C}$, sepset, $\cal{M}$.
		\end{algorithmic}
	\end{algorithm}
	
	\clearpage
	
	\vspace{0.2in}
	
	{\small

	}

\end{document}